\newcommand{\todo}[1]{\noindent\textcolor{Green}{\textsc{todo:} #1}}
\newtheorem{theorem}{Theorem}
\newtheorem{observation}[theorem]{Observation}
\newtheorem{corollary}[theorem]{Corollary}
\newtheorem{lemma}{Lemma}
\newcommand{\zz}{\mathbb{Z}}
\newcommand{\qq}{\mathbb{Q}}
\newcommand{\fract}{\mathrm{frac}}
\newcommand{\dd}{d}
\newcommand{\notd}{t}
\DeclarePairedDelimiter{\nint}\lfloor\rceil
\DeclarePairedDelimiter{\floor}\lfloor\rfloor
\newcommand{\norm}[1]{\left\lVert #1 \right\rVert}
\newcommand{\abs}[1]{\left\lvert #1 \right\rvert}
\DeclareMathOperator{\lattice}{\mathcal{L}}
\DeclareMathOperator{\OTilde}{\ensuremath{\tilde{O}}}
\DeclareMathOperator{\rem}{rem}
\newcommand{\ie}{i.\,e.}
\newcommand{\eg}{e.\,g.}
\newcommand*\Let[2]{\State #1 $\gets$ #2}
\algrenewcommand\alglinenumber[1]{\sf\footnotesize \texttt{#1\,}}%
\algrenewcommand\algorithmicrequire{\textbf{Precondition:}}
\algrenewcommand\algorithmicensure{\textbf{Postcondition:}}
\title{Faster Lattice Basis Computation via a Natural Generalization\\ of the Euclidean Algorithm
}
\author{
	Kim-Manuel Klein
	\\\small University of Lübeck
	\\\small kimmanuel.klein@uni-luebeck.de
	\and
	Janina Reuter
	\\\small Kiel University
	\\\small janina.reuter@informatik.uni-kiel.de
}
\date{}
\begin{document}
\maketitle
\begin{abstract}
The Euclidean algorithm is one of the oldest algorithms known to mankind. Given two integral numbers $a_1$ and $a_2$, it computes the greatest common divisor (gcd) of $a_1$ and $a_2$ in a very elegant way. From a lattice perspective, it computes a basis of the lattice generated by $a_1$ and $a_2$ as $\gcd(a_1,a_2) \mathbb{Z} = a_1 \mathbb{Z} + a_2 \mathbb{Z}$. In this paper, we show that the classical Euclidean algorithm can be adapted in a very natural way to compute a basis of a lattice that is generated by vectors $A_1, \ldots , A_n \in \mathbb{Z}^d$ with $n> \mathrm{rank}(A_1, \ldots ,A_n)$. 
Similar to the Euclidean algorithm, our algorithm is easy to describe and implement and can be written within 12 lines of pseudocode.

As our main result, we obtain an algorithm to compute a lattice basis for given vectors $A_1, \ldots , A_n \in \mathbb{Z}^d$ in time (counting bit operations) $LS + \tilde{O}((n-d)d^2 \cdot \log(||A||)$, where $LS$ is the time required to obtain the exact fractional solution of a certain system of linear equalities. The analysis of the running time of our algorithms relies on fundamental statements on the fractionality of solutions of linear systems of equations. 

So far, the fastest algorithm for lattice basis computation was due to Storjohann and Labahn (ISSAC 1996) having a running time of $\tilde{O}(nd^\omega\log ||A||)$, where $\omega$ denotes the matrix multiplication exponent. We can improve upon their running time as our algorithm requires at most $\tilde{O}(\max\{n-d, d^2\}d^{\omega(2)-1} \log ||A||)$ bit operations, where $\omega(2)$ denotes the exponent for multiplying a $n\times n$ matrix with an $n\times n^2$ matrix. For current values of $\omega$ and $\omega(2)$, our algorithm improves the running time therefore by a factor of at least $d^{0.12}$ (since $n>d$) providing the first general runtime improvement for lattice basis computation in nearly 30 years. In the cases of either few additional vectors, e.g. $n-d \in d^{o(1)}$, or a very large number of additional vectors, e.g. $n-d \in \Omega (d^k)$ and $k>1$, the run time improves even further in comparison.

At last, we present a postprocessing procedure which yields an improved size bound of $\sqrt{d} ||A||$ for vectors of the resulting basis matrix. The procedure only requires $\tilde{O}(d^3 \log ||A||)$ bit operations. By this we improve upon the running time of previous results by a factor of at least $d^{0.74}$.
\end{abstract}


\section{Introduction}\label{sec:intro}
Given two integral numbers $a_1$ and $a_2$, the Euclidean algorithm computes the greatest common divisor (gcd) of $a_1$ and $a_2$ in a very elegant way. Starting with $s = a_1$ and $t= a_2$, a residue $r$ is being computed by setting
\begin{align*}
    r = \min_{x \in \zz} \{ r \in \zz \mid s x + r = t \} = \min \{ t \pmod s, |(t \pmod s) - s| \}.
\end{align*}
This procedure is continued iteratively with $s = t$ and $t = r$ until $r$ equals $0$. Since $r \leq \lfloor t/2 \rfloor$ the algorithm terminates after at most $\log(\min \{ a_1, a_2\})$ many iterations.

An alternative interpretation of the gcd or the Euclidean algorithm is the following: Consider all integers that are divisible by $a_1$ or respectively $a_2$, which is the set $a_1 \zz$ or respectively the set $a_2 \zz$. Consider their sum (i.e. Minkowski sum)
\begin{align*}
    a_1 \zz + a_2 \zz = \{a + b \mid a \in a_1 \zz, b \in a_2 \zz \}.
\end{align*}
It is well-known that the set $a_1 \zz + a_2 \zz$ can be generated by a single element, which is the gcd of $a_1$ and $a_2$, i.e.
\begin{align*}
    a_1 \zz + a_2 \zz = gcd(a_1, a_2) \zz.
\end{align*}
Furthermore, the set $\mathcal{L}((a_1, a_2)) = a_1 \zz + a_2 \zz$ is closed under addition, subtraction and scalar multiplication, which is why all values for $s,t$ and $r$, as defined above in the Euclidean algorithm, belong to $\mathcal{L}((a_1, a_2))$. In the end, the smallest non-zero element for $r$ obtained by the algorithm generates $\mathcal{L}((a_1, a_2))$ and hence $\mathcal{L}((a_1, a_2)) = r \zz = gcd(a_1,a_2) \zz$.

This interpretation does not only allow for an easy correctness proof of the Euclidean algorithm, it also suggests a generalization of the algorithm into higher dimensions. For this, we consider vectors $A_1, \ldots , A_{n} \in \zz^\dd$ and the set of points in space generated by sums of integral multiples of the given vectors, i.e.
\begin{align*}
    A_1 \zz + \ldots A_{n} \zz.
\end{align*}
This set is called a \emph{lattice} and is generally defined for a given matrix $A$ with column vectors $A_1, \ldots , A_n$ by
\begin{align*}
    \mathcal{L}(A) = \{ \sum_{i=1}^{n} \lambda_i A_i \mid \lambda \in \zz^n \}.
\end{align*}
One of the most basic facts from lattice theory is that every lattice $\mathcal{L}$ has a basis $B$ such that $\mathcal{L}(B) = \mathcal{L}(A)$, where $B$ is a square matrix.\footnote{To be precise, the basis is a square matrix iff the lattice is full dimensional, ie. the span of generating vectors has dimension $d$. In general a basis consists of $\mathrm{rank}(A)$ vectors.}
Note that in this sense the Euclidean algorithm simply computes a basis of the one-dimensional lattice with $gcd(a_1,a_2) \zz = a_1 \zz + a_2 \zz$.

Hence, a multidimensional version of the Euclidean algorithm should compute for a given matrix $A = (A_1, \ldots , A_{n})$ a basis $B \in \zz^{\dd \times \dd}$ such that
\begin{align*}
    \mathcal{L}(B) = \mathcal{L}(A).
\end{align*}
The problem of computing a basis for the lattice $\lattice(A)$ is called \emph{lattice basis computation}. 
In this paper, we show that the classical Euclidean algorithm can be naturally generalized to do just that. Using this approach, we improve upon the running time of existing algorithms for lattice basis computation.

\subsection{Lattice Basis Computation}

Computing a basis of a lattice is one of the most basic algorithmic problems in lattice theory. Often it is required as a subroutine by other algorithms~\cite{DBLP:conf/stoc/Ajtai96,DBLP:conf/eurocal/BuchmannP87,DBLP:conf/stoc/GentryPV08, DBLP:books/daglib/0018102, DBLP:journals/jsc/Pohst87}. There are mainly two methods on how a basis of a lattice can be computed. The most common approaches rely on either a variant of the LLL algorithm~\cite{lll_algorithm} or on computing the Hermite normal form (HNF), where variants of the HNF provide the faster methods. Considering these approaches however, one encounters two major problems. First, the entries of the computed basis can be as large as the determinant and therefore exponential in the dimension. Secondly and more importantly, intermediate numbers on the computation might even be exponential in their bit representation~\cite{DBLP:conf/fct/Frumkin77}. This effect is called intermediate coefficient swell. Due to this problem, it is actually not easy to show that a lattice basis can be computed in polynomial time. Kannan und Bachem~\cite{DBLP:journals/siamcomp/KannanB79} were the first ones to show that the intermediate coefficient swell can be avoided when computing the HNF and hence a lattice basis can actually be computed in polynomial time. The running time of their algorithm was later improved by  Chou and Collins~\cite{DBLP:journals/siamcomp/ChouC82} and Iliopoulos~\cite{DBLP:journals/siamcomp/Iliopoulos89}.

Currently, the most efficient algorithm for computing the Hermite Normal Form (HNF), and consequently for lattice basis computation, is the method developed by Storjohann and Labahn~\cite{DBLP:conf/issac/StorjohannL96}.
Given a full rank matrix $A\in\zz^{\dd\times n}$ the \acs{hnf} can be computed by using $\OTilde(nd^{\omega}\cdot \log\norm{A} )$\footnote{We use the $\OTilde$ notation to omit logarithmic factors, \ie{} $f \in \tilde{O}(g)$ iff $f \in O(g \cdot \log^c(g))$ for some constant $c$. In particular, we omit $\log d$ and $\log \log \norm{A}$ factors.} many bit operations, where $\omega$ denotes the exponent for matrix multiplication and is currently $\omega \leq 2.371552$~\cite{williams2024new_matrixmultiplication}. 
The algorithm by Labahn and Storjohann~\cite{DBLP:conf/issac/StorjohannL96} improves upon a long series of papers~\cite{DBLP:journals/siamcomp/KannanB79, DBLP:journals/siamcomp/ChouC82,DBLP:journals/siamcomp/Iliopoulos89} and, despite being nearly 30 years old, it still offers the best running time available. Only in the special case that $n-d = O(1)$, Li and Storjohann~\cite{DBLP:conf/issac/LiS22} manage to obtain a running time of $\OTilde(d^{\omega}\cdot \log\norm{A} )$ which essentially matches matrix multiplication time. There are other approaches with improved running time for special cases~\cite{pernet2010fast} and/or randomized algorithms~\cite{DBLP:conf/issac/LiuP19, DBLP:journals/talg/BirmpilisLS23}. 

Recent papers considering general lattice basis computation focus on properties of the resulting basis but do not improve the running time. There are several algorithms that preserve orthogonality from the original matrix, \eg{} $\norm{B^*}  \leq \norm{A^*} $, or improve on the $\ell_\infty$ norm of the resulting matrix~\cite{DBLP:conf/stoc/NovocinSV11,DBLP:conf/issac/NeumaierS16}, or both~\cite{DBLP:conf/crypto/HanrotPS11,DBLP:conf/issac/LiN19,DBLP:conf/focs/CaiN97,DBLP:books/daglib/0018102}. 
Except for an algorithm by Lin and Nguyen~\cite{DBLP:conf/issac/LiN19}, all of the above algorithms have a significantly higher time complexity compared to Labahn's and Storjohann's \acs{hnf} algorithm. 
The algorithm by Lin and Nguyen uses existing HNF algorithms and applies a separate coefficient reduction algorithm resulting in a basis with $\ell_\infty$ norm bounded by $d \norm{A} $.

The best size bound on the output basis is $\sqrt{d}\norm{A}$~\cite{DBLP:conf/crypto/HanrotPS11, DBLP:conf/focs/CaiN97, DBLP:books/daglib/0018102, DBLP:conf/issac/LiN19}. This bound was previously achieved via the size reduction known from the LLL algorithm~\cite{lll_algorithm}.
For a more in-depth discussion of the literature on computing lattice bases with bounded size, we refer to Li and Nguyen~\cite{DBLP:conf/issac/LiN19}.

\begin{table}
\renewcommand\thempfootnote{\arabic{mpfootnote}}
\begin{minipage}{\linewidth}
\centering
\begin{tabular}{lll} 
\toprule
\textsc{Algorithm by} & \textsc{Bit Complexity} & \textsc{Basis Size Bound}\smallskip\\  
\midrule 
Kannan and Bachem~\cite{DBLP:journals/siamcomp/KannanB79} & $\tilde{O}(nd^6\log\norm{A})$ & $\det(\lattice(A))$\\

Chou and Collins~\cite{DBLP:journals/siamcomp/ChouC82} & $\tilde{O}(nd^4\log\norm{A})$ & $\det(\lattice(A))$\\

Hafner and McCurley~\cite{DBLP:journals/siamcomp/HafnerM91} & $\tilde{O}(nd^{3} \log\norm{A})$ & $\det(\lattice(A))$\\

Labahn and Storjohann~\cite{DBLP:conf/issac/StorjohannL96} &$\tilde{O}(nd^{\omega}\log\norm{A})$ & $\det(\lattice(A))$\\


Hanrot, Pujol, and Stehlé~\cite{DBLP:conf/crypto/HanrotPS11} & $\tilde{O}(\max\{n,d\}^{8}\log\norm{A})$ & $\sqrt{d}\norm{A}$ \\

CNMG algorithm~\cite{DBLP:conf/focs/CaiN97,DBLP:books/daglib/0018102} & $\tilde{O}(\max\{n,d\}^{4}\log\norm{A})$ & $\sqrt{d}\norm{A}$ \\

Li and Nguyen~\cite{DBLP:conf/issac/LiN19} & $\tilde{O}(\max\{n,d\}^{5}\log\norm{A})$ & $\sqrt{d}\norm{A}$\\

Li and Nguyen~\cite{DBLP:conf/issac/LiN19} & $\tilde{O}(\max\{n,d\}^{\omega+1}\log\norm{A})$ & $d\norm{A}$ \\

Our algorithm & $\tilde{O}(\max\{n-d, d^2\}d^{\omega(2)-1}\log\norm{A})$ & $\sqrt{d}\norm{A}$


\end{tabular}
\caption{Literature overview for lattice basis computation.}
\label{tab:related_work}
\end{minipage}
\end{table}

\subsection{Our Contribution}
In this paper we develop a fundamentally new approach for lattice basis computation given a matrix $A$ with column vectors $A_1, \ldots, A_n \in \zz^d$. Our approach does not rely on any normal form of a matrix or the LLL algorithm. Instead, we show a direct way to generalize the classical Euclidean algorithm to higher dimensions. After a thorough literature investigation and talking to many colleagues in the area, we were surprised to find out that this approach actually seems to be new.

Our approach does not suffer from intermediate coefficient growth and hence gives an easy way to show that a lattice basis can be computed in polynomial time. 

In~\autoref{sec:algorithm_sketch}, we develop an algorithm that chooses an initial basis $B$ from the given vectors and updates the basis according to a remainder operation and then exchanges a vector by this remainder. In every iteration, the determinant of $B$ decreases by a factor of at least $1/2$ and hence the algorithm terminates after at most $\log \det(B)$ many iterations. Our algorithm can be easily described and implemented. The development of the size and the simplicity of the algorithm both are similar to the Euclidean algorithm. 

In~\autoref{sec:fast_euclidean}, we modify our previously developed generalization of the Euclidean algorithm in order to make it more efficient. We exploit the freedom in the pivotization and also combine several iterations at once to optimize its running time. Our main result is an algorithm which requires
\begin{align*}
    LS(\dd,n-\dd,\norm{A})  + \Tilde{O}((n-d)d^2 \cdot \log(\norm{A})),
\end{align*}
many bit operations, where $LS(\dd,n-\dd,\norm{A})$ is the time required to obtain an exact fractional solution matrix $X \in \zz^{\dd \times (n-\dd)}$ to a linear system $B X = C$, with matrix $B \in \zz^{\dd \times \dd}$ and matrix $C \in \zz^{\dd \times (n-\dd)}$ with $\norm{B}, \norm{C} \leq \norm{A}$.
Based on the work of ~\cite{DBLP:conf/issac/BirmpilisLS19}, we argue in~\autoref{sec:lin_sys_solving} that the term $LS(\dd,n-\dd,\norm{A})$, being the bottleneck of the algorithm, can be bounded by 
\begin{align*}
    \tilde{O}(\max\{n-d, d^k\}d^{\omega(k+1)-k}\log\norm{A} )
\end{align*}
for every $k\geq 0$, where $\omega(k)$ is the matrix multiplication exponent of an $n \times n^k$ matrix by an $n^k \times n$ matrix (see~\cite{gall2024faster_rectangularMM, williams2024new_matrixmultiplication}). 
Setting $k=1$, we obtain a running time of $\tilde{O}(\max\{n-d, d\}d^{\omega(2)-1} \log\norm{A} )$, with $\omega(2) \leq 3.250385$~\cite{williams2024new_matrixmultiplication}. This shows that we can improve upon the algorithm by Storjohann and Labhan~\cite{DBLP:conf/issac/StorjohannL96} in any case at least by a factor of roughly $d^{0.12}$ and therefore obtain the first general improvement to this fundamental problem in nearly 30 years. Any future improvement that one can achieve in the computation of exact fractional solutions to linear systems would directly translate to an improvement of our algorithm.

Setting $k=0$ yields a running time of $\tilde{O}((n-d)d^{\omega} \log\norm{A} )$ for our algorithm. Hence, this improves upon the algorithm by Storjohann and Labhan~\cite{DBLP:conf/issac/StorjohannL96} in the case that $n-d \in o(d)$. In the case that $n-d \in O(1)$ we match the running time of the algorithm by Li and Storjohann~\cite{DBLP:conf/issac/LiS22}. Note however, that the algorithm by Li and Storjohann can not be iterated to obtain an algorithm of the same running time for general $n-d > O(1)$. This is due to a coefficient blowup in the output matrix.

Besides the new algorithmic approach, one main tool that we develop is a structural statement on the fractionality of solutions of linear systems. 
Note that when considering bit operations instead of only counting arithmetic operations, one has to pay attention to the bit length of the respective numbers. When operating with precise fractional solutions to linear systems, one can typically only bound the bit length of the numbers by $\Tilde{O}(\dd \log(\norm{A}))$. Hence, we would obtain this term as an additional factor when one is adding or multiplying these numbers. In order to deal with this issue and obtain a better running time, we prove in~\autoref{sec:fractionality} a fundamental structural statement regarding the fractionality (i.e. the size of the denominator) of solutions to linear systems of equations. Essentially, we show that the fractionality of the solutions is only large if there are few integral points contained in a subspace of $\lattice(B)$ and vice versa. This structural statement can then be used in the runtime analysis of our algorithm as it iterates over the integral points in the respective subspace.

In terms of coefficient growth of the output basis matrix, our algorithm is guaranteed to return a solution matrix $S$ such that $\norm{S} \leq d \norm{A}$. In~\autoref{sec:root_n_solution_size} however, we present a postprocessing procedure that further improves upon the size bound of the output basis obtaining $\norm{S} \leq \sqrt{d}/2 \norm{A}$. In contrast to previous techniques relying on the size reduction procedure known from the LLL algorithm
in order to improve upon the size bound of the output matrix, our algorithm relies on a classical result from discrepancy theory which we have not seen applied in this context so far.

Regarding the running time, our postprocessing procedure requires only  $\tilde{O}(d^3 \log \norm{A})$ many bit operations. This running time is only possible due to our structural theorem on the fractionality of solutions as otherwise an additional factor of $d$ would be required. The previously best known algorithm~\cite{DBLP:conf/focs/CaiN97,DBLP:books/daglib/0018102} achieving a bounded output basis $S$ with $\norm{S} \leq \sqrt{d}/2 \norm{A}$ has a running time of $\tilde{O}(\max\{d,n\}^{4}\log\norm{A}_\infty)$. Hence, we improve upon their running time by a factor of at least $\approx d^{0.74}$.


\section{A Multi-Dimensional Generalization of the Euclidean Algorithm}\label{sec:algorithm_sketch}
In this section, unless stated otherwise, we assume that $rank(A) = d$ and therefore the lattice $\lattice(A)$ is full dimensional. However, our algorithms can be applied in a similar way if $rank(A) < d$, which we will discuss later for the refined version in~\autoref{sec:lower_dimensions}. 

\subsection*{Preliminaries}
Consider a (not necessarily full) dimensional lattice $\lattice(B)$ for a given basis $B \in \zz^{\dd \times j}$ with $j \leq \dd$. An important notion that we need is the so called \emph{fundamental parallelepiped}
\begin{align*}
    \Pi(B) = \{ B x \mid x \in [0,1)^j \}
\end{align*}
\begin{figure}
    \centering
    \includegraphics[width=0.2\textwidth]{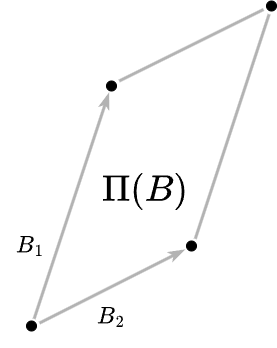}
    \caption{The parallelepiped of $B=(B_1B_2)$.}
    \label{fig:parallelepiped}
\end{figure}
see also~\autoref{fig:parallelepiped}. Let $V$ be the space generated by the columns in $B$. As each point $a \in \mathbb{R}^\dd$ can be written as
\begin{align*}
    a = B \floor{x} + B \{ x \},
\end{align*}
it is easy to see that the space $V$ can be partitioned into parallelepipeds. Here, $\floor{x}$ denotes the vector, where each component $x_i$ is rounded down and $\{ x \} = x - \floor{x}$ is the vector with the respective fractional entries $x_i \in [0,1)$.
In fact, the notion of $\Pi(B)$ allows us to define a multi-dimensional modulo operation by mapping any point $a \in V \cap \zz^d$ to the respective \emph{residue vector} in the parallelepiped $\Pi(B)$, i.e. 
\begin{align*}
    a \pmod {\Pi(B)} := B \{ B^{-1} a \} \in \Pi(B).
\end{align*}
Furthermore, for $a \in \zz^\dd$, we denote with $\nint{a}$ the next integer from $a$, which is $\floor{a + 1/2}$. When we use these notations on a vector $a \in \zz^d$, the operation is performed entry-wise.

Note that in the case that $B \in \zz^{\dd \times \dd}$ the parallelepiped $\Pi(B)$ has the nice property, that its volume as well as the number of contained integer points is exactly $\det(B)$, i.e. 
\begin{align*}
    vol(\Pi(B)) = |\Pi(B) \cap \zz^n| = \det(B).
\end{align*}

In the following we will denote by $A_i$ the $i-th$ column vector of a matrix $A$. In the case that $a \in \zz^\dd$ is a vector we will denote by $a_i$ the $i$-th entry of $a$. To simplify the notation, we denote by $\det (B)$ the absolute value of the determinant of $B$ and by $\norm{B}$ of a matrix $B$ or a vector $B$, we denote the respective infinity norm, meaning the largest absolute entry of $B$.

In our algorithm, we will alter our basis step by step by exchanging column vectors. We denote the exchange of column $i$ of a matrix $B$ with a vector $v$ by $B\setminus B_i \cup v$. The notation $B\cup v$ for a matrix $B$ and a vector $v$ of suitable dimension denotes the matrix, where $v$ is added as another column to matrix $B$. Similarly, the notation $B \cup S$ for a matrix $B$ and a set of vectors $S$ (with suitable dimension) adds the vectors of $S$ as new columns to matrix $B$. While the order of added columns is ambiguous, we will use this operation only in cases where the order of column vectors does not matter.

Our algorithms use three different subroutines from the literature. First, in order to compute a submatrix $B$ of $A$ of maximum rank, we use the following theorem.
\begin{lemma}[\cite{DBLP:conf/issac/LiS22}]\label{lem:maximal_subsystem}
Let $A\in\zz^{\Tilde{d}\times \Tilde{n}}$ have full column rank. There exists an algorithm that finds indices $i_1, \ldots, i_{\Tilde{d}}$ such that $A_{i_1}, \ldots, A_{i_{\Tilde{d}}}$ are linearly independent using $\OTilde(\Tilde{n}\Tilde{d}^{\omega-1}\cdot \log\norm{A})$ bit operations. 
\end{lemma}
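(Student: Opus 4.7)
The strategy is to reduce the problem to a rank/echelon-form computation over a small finite field, so that we can leverage fast rectangular matrix arithmetic while controlling bit costs. First, I would choose a prime $p$ of bit length $\tilde{O}(\log\|A\| + \log\tilde{d})$ such that $\operatorname{rank}_{\mathbb{F}_p}(A) = \operatorname{rank}_{\mathbb{Q}}(A) = \tilde{d}$. Such a prime exists because, by Hadamard's inequality, any nonzero $\tilde{d} \times \tilde{d}$ minor of $A$ has absolute value at most $\tilde{d}^{\tilde{d}/2}\|A\|^{\tilde{d}}$, so at most $\tilde{O}(\tilde{d}\log\|A\|)$ primes can divide a given fixed nonzero maximal minor. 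A deterministic selection can be done either by trying the first $\tilde{O}(\tilde{d}\log\|A\|)$ primes and checking the resulting rank, or by using a bounded number of primes in parallel and combining results; in either case the overhead does not change the claimed running time.

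Next, over the field $\mathbb{F}_p$ I would compute a column-style LUP (equivalently, a PLUQ) decomposition of $A \bmod p$ using the standard block recursive algorithm based on fast matrix multiplication. For an $\tilde{d} \times \tilde{n}$ matrix of rank $\tilde{d}$ (with $\tilde{n} \geq \tilde{d}$), this decomposition can be computed using $O(\tilde{n}\tilde{d}^{\omega-1})$ field operations: split $A$ into $\lceil \tilde{n}/\tilde{d}\rceil$ blocks of size $\tilde{d} \times \tilde{d}$, perform LUP on the first block in $O(\tilde{d}^\omega)$ operations, and use the computed factors to update the remaining blocks in the same per-block cost, amortizing to $O(\tilde{n}\tilde{d}^{\omega-1})$ overall. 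The pivot columns produced by this decomposition give indices $i_1, \ldots, i_{\tilde{d}}$ such that $(A_{i_1}, \ldots, A_{i_{\tilde{d}}}) \bmod p$ is nonsingular over $\mathbb{F}_p$; hence its determinant is nonzero over $\mathbb{Q}$ as well, so the corresponding columns of $A$ are linearly independent as required.

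For the bit complexity, each operation in $\mathbb{F}_p$ costs $\tilde{O}(\log p) = \tilde{O}(\log\|A\|)$ bit operations using fast integer arithmetic, giving a total of $\tilde{O}(\tilde{n}\tilde{d}^{\omega-1}\log\|A\|)$ bit operations for the decomposition. The cost of reducing $A \bmod p$ and the prime-selection overhead both fit comfortably into the same bound.

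The main obstacle I anticipate is the deterministic selection of a prime $p$ of the right size: one must guarantee that the mod-$p$ rank matches the rational rank without blowing up $\log p$ beyond $\tilde{O}(\log\|A\|)$. I would address this by the pigeonhole/prime-counting argument outlined above, or by appealing directly to the cited work of Li and Storjohann, where such a selection is carried out explicitly in the service of their HNF algorithm.
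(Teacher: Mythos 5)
The paper does not prove this lemma; it is imported as a black box from Li and Storjohann, so there is no in-paper proof to compare against. Your overall reduction --- fix a prime $p$ that preserves the rank, compute a column rank profile (PLUQ) over $\mathbb{F}_p$ in $O(\tilde n\tilde d^{\omega-1})$ field operations, and observe that a nonsingular maximal minor mod $p$ certifies linear independence over $\qq$ --- is the standard and correct skeleton for a result of this kind, and the block-amortization of the elimination cost is right.

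The genuine gap is the deterministic prime selection, which you correctly flag as the main obstacle but then resolve with a claim that does not hold. Your first option, trying the first $\OTilde(\tilde d\log\norm A)$ primes and checking the rank each time, does not fit inside the stated bound: a single rank check over $\mathbb{F}_p$ is already an $\OTilde(\tilde n\tilde d^{\omega-1}\log\norm A)$-bit-operation computation, and in the worst case all $\OTilde(\tilde d\log\norm A)$ bad primes (those dividing the relevant nonzero $\tilde d\times\tilde d$ minor) precede the first good one, so the worst-case cost is $\OTilde(\tilde n\tilde d^{\omega}\log\norm A)$ --- a factor $\tilde d$ over the target. Your second option, a ``bounded number of primes in parallel,'' is underspecified: a constant-size bundle cannot exclude all bad primes, while a bundle large enough to guarantee a good one by the pigeonhole bound carries the same $\tilde d$ overhead (e.g.\ CRT modulo a product exceeding the Hadamard bound makes each ring operation cost $\OTilde(\tilde d\log\norm A)$ bits). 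With a \emph{random} prime of bit length $\Theta(\log\tilde d + \log\log\norm A)$ your argument does close cleanly in Las Vegas fashion, but the deterministic variant --- which is what Li and Storjohann supply, and which is the substantive content of the citation --- is exactly the step your sketch ends up handing back to them. As written, this is a correct reduction to the cited result rather than a self-contained proof of it.
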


As a second subroutine which is required in~\autoref{sec:fast_euclidean} to compute the greatest common divisor of two numbers $a_1,a_2 \leq a$, we use the algorithm by Schönhage~\cite{DBLP:journals/acta/Schonhage71} which requires $\Tilde{O}(\log a)$ many bit operations. Note, that the bit complexity of the classical Euclidean algorithm is actually $\Tilde{O}(\log^2 a)$ as in each iteration of the algorithm, the algorithm operates with numbers having $O(\log a)$ bits.

As a third subroutine, we require an algorithm to solve linear systems of equations of the form $Bx = c$. Due to its equivalence to matrix multiplication this can be done in time $O(d^{\omega})$ time counting only arithmetic operations. However, since we use the more precise analysis of bit operations, we use the algorithm by Birmpilis, Labahn, and Storjohann~\cite{DBLP:conf/issac/BirmpilisLS19} who developed an algorithm analyzing the bit complexity of linear system solving. Their algorithm requires $\Tilde{O}(d^{\omega} \cdot \log \norm{A})$ many bit operations. We modify their algorithm in~\autoref{sec:lin_sys_solving} to obtain an efficient algorithm solving $B X = C$ for some matrix $C$ in order to compute a solution matrix $X$.

\subsection{The Algorithm}
Given two numbers, the classical Euclidean algorithm, essentially consists of two operations. First, a \emph{modulo operation} computes the modulo of the larger number and the smaller number. Second, an \emph{exchange operation} discards the larger number and adds the remainder instead. The algorithm continues with the smaller number and the remainder. 

Given vectors $A=\{A_1, \ldots , A_{\dd+1}\} \subset \zz^\dd$, our generalized algorithm performs a multi-dimensional version of \emph{modulo} and \emph{exchange operations} of columns with the objective to compute a basis $B \in \zz^{\dd \times \dd}$ with $\lattice (B) = \lattice (A)$. First, we choose $\dd$ linearly independent vectors from $A$ which form a non-singular matrix $B$. The lattice  $\lattice(B)$ is a sub-lattice of $\lattice(A)$. Having this sub-basis, we can perform a division with residue in the lattice $\lattice (B)$. Hence, the remaining vector $a \in A \setminus B$ can be represented as
\begin{align*}
    a = B \floor{B^{-1}a} +r,
\end{align*}
where $r$ is the remainder $a \pmod{\Pi (B)}$, see also \autoref{fig:high_dimension_modulo}. 
\begin{figure}[t]
   \centering
      \subfloat[The modulo operation in dimension $2$.
      \label{fig:high_dimension_modulo}]{{\small
    \includegraphics[width=0.40\textwidth]{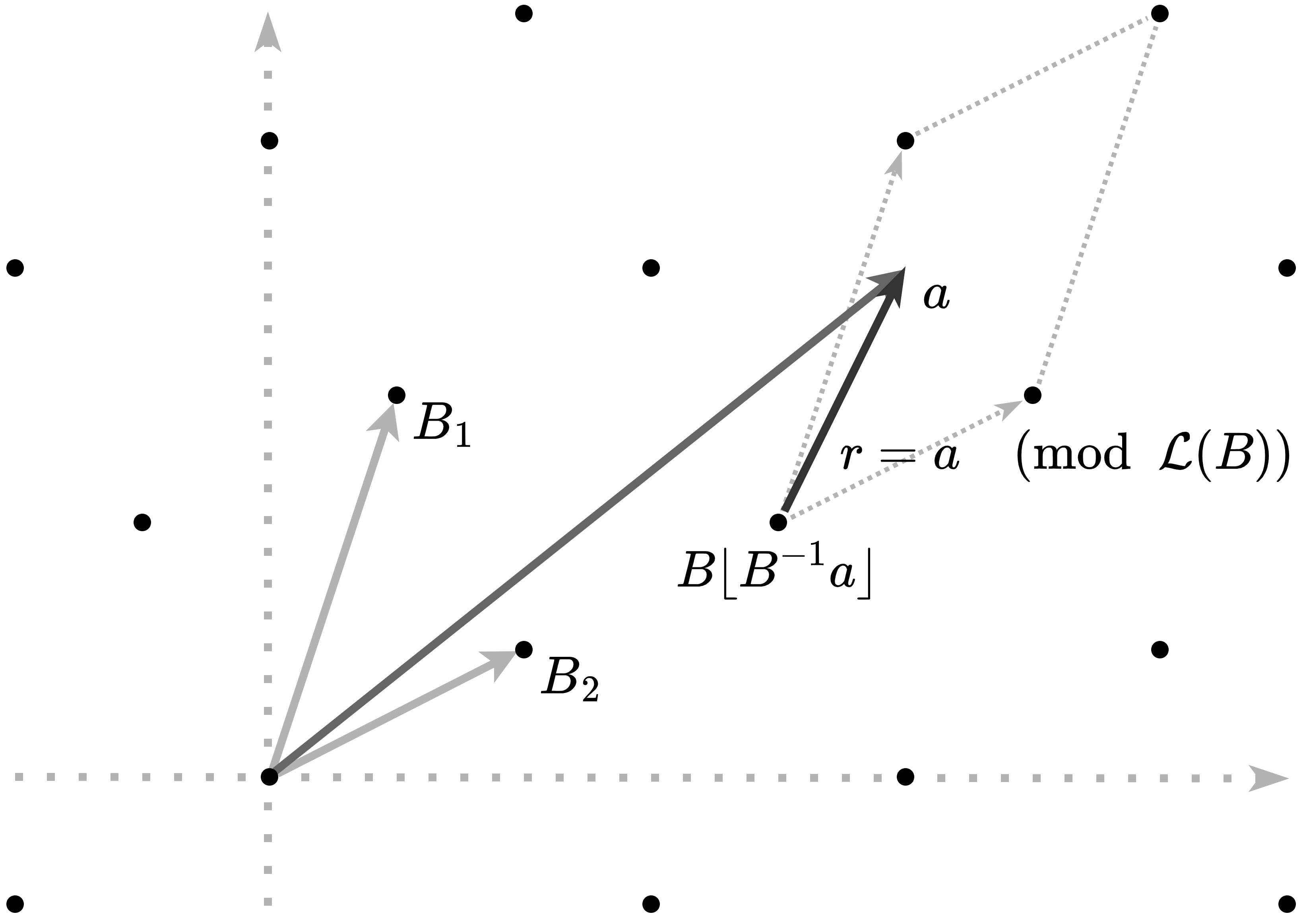}}}\qquad\qquad\qquad
      \subfloat[Exchange of a basis vector and the parallelopipeds for $B_1$ and $B_2$ (solid), $B_2$ and $r$ (dotted), and $B_2$ and $r'$ (dashed).
      \label{fig:basis_exchange}]{{\small
    \includegraphics[width=0.40\textwidth]{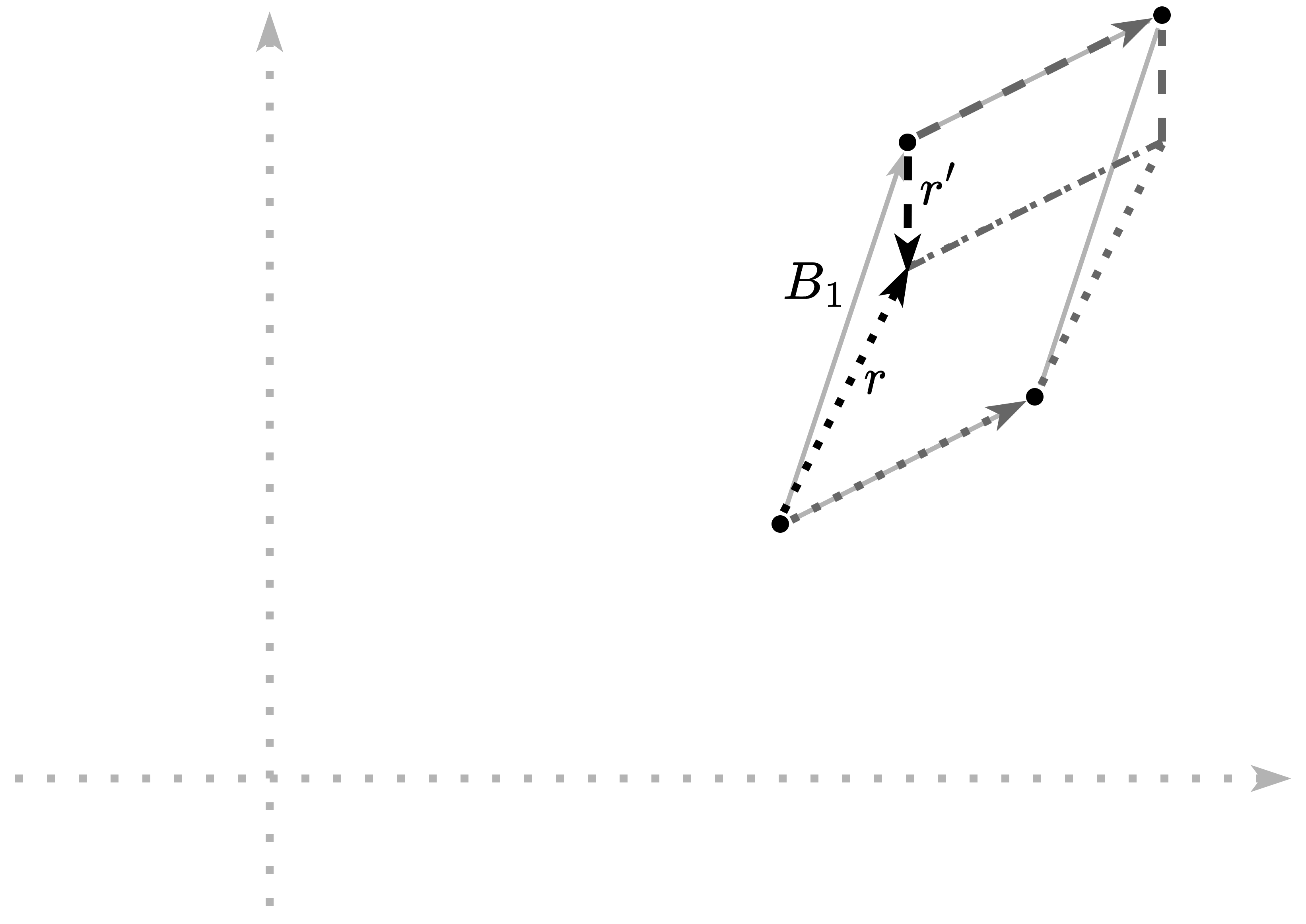}}}\\
    \caption{The modulo operation with respect to a lattice and the exchange operation depending on $\nint{x_1}$.}
\end{figure}
In dimension $d= 1$ this is just the classical division with residue and the corresponding modulo operation, \ie{} $a = b \cdot \floor{a/b} + r$.

Having the residue vector $r$ at hand, the \emph{exchange step} of our generalized version of the Euclidean algorithm exchanges a column vector of $B$ with the residue vector $r$. In dimension $>1$, we have the choice on which column vector to discard from $B$. The choice we make is based on the solution $x \in \qq^\dd$ of the linear system $Bx = a$.
\begin{itemize}
    \item Case 1: $x \in \zz^\dd$. In the case that the solution $x$ is integral, we know that $a \in \lattice (B)$ and hence $\lattice (B \cup a ) = \lattice (B)$. Our algorithm terminates.
    \item Case 2: There is a fractional component $\ell$ of $x$. In this case, our algorithm exchanges $B_i$ with $r$, \ie{} $B' = B \setminus  B_\ell  \cup  r $.
\end{itemize}
The algorithm iterates this procedure with basis $B'$ and vector $a = B_\ell$ until Case 1 is achieved.\medskip

\begin{addmargin}[3.5em]{3.5em}
\begin{tcolorbox}[
sharp corners=all,
colback=white,
colframe=black,
size=tight,
boxrule=0.2mm,
left=3mm,right=3mm,top=3mm,bottom=3mm
]
{\begin{multicols}{2}

\textbf{Euclidean Algorithm}\bigskip

\textsc{Modulo Operation}\\
$t = s\lfloor s^{-1} t\rfloor + r$ \medskip

\textsc{Exchange Operation}\\
$t = s, \ s=r$ \medskip

\textsc{Stop Condition}\\
$s^{-1} t\ $ is integral

\columnbreak

\textbf{Generalized Euclidean Algorithm}\bigskip

\textsc{Modulo Operation}\\
$a = B\floor{B^{-1}a} + r$ \medskip

\textsc{Exchange Operation}\\
$a = B_\ell, \ B_\ell := r$ \medskip

\textsc{Stop Condition}\\
$B^{-1}a\ $ is integral

\end{multicols}}
\end{tcolorbox}
\end{addmargin}

Two questions arise: Why is this algorithm correct and why does it terminate?

\textbf{Termination:}\\
The progress in step 2 can be measured in terms of the determinant. For $x$ with $Bx=a$ the exchange step in case 2 swaps $B_i$ with $r = B \{ x \}$ and $\{x_i\} \neq 0$ to obtain the new basis $B'$. By Cramer's rule we have that $\{x_\ell\} = \frac{\det B'}{\det B}$ and hence the determinant decreases by a factor of $\{x_\ell\} < 1$. The algorithm eventually terminates since $\det(\lattice(A))\geq 1$ and all involved determinants are integral since the corresponding matrices are integral. A trivial upper bound for the number of iterations is  the determinant of the initial basis. 

\textbf{Correctness:}\\
Correctness of the algorithm follows by the observation that $\lattice(B \cup a) = \lattice(B \cup r)$. To see this, it is sufficient to prove $a \in \lattice(B \cup r)$ and $r \in \lattice(B \cup a)$. By the definition of $r$ we get that $a = Bx = B\floor{x} + B\{x\} = B\floor{x} + r$. Hence, $a$ and $r$ are integral combinations of vectors from $B\cup r$ and $B\cup a$, respectively, and hence $\lattice(B\cup a) = \lattice(B \cup r)$.


The multiplicative improvement of the determinant in step 2 can be very close to $1$, \ie{} $\frac{\det(B)-1}{\det(B)}$. In the classical Euclidean algorithm a step considers the remainder $r$ for $a = b\floor{a/b} + r$. The variant described in \autoref{sec:intro} considers an $r'$ for $a = b\nint{a/b} + r'$. Taking the next integer instead of rounding down ensures that in every step the remainder in absolute value is at most half of the size of $b$. Our generalized Euclidean algorithm uses a modified modulo operation that does just that in a higher dimension. In our case, this modification ensures that the absolute value of the determinant decreases by a multiplicative factor of at most $1/2$ in every step as we explain below. The number of steps is thus bounded by $\log\det(B)$. The generalization to higher dimensions chooses $i$ such that $x_i$ is fractional and rounds it to the next integer $\nint{x_i}$ while the other entries of $x$ are again rounded $\floor{x_j}$ for $j\neq i$. Formally, this modulo variant is defined as 
\begin{align*}
    a \ \ (\bmod'\ \Pi(B)) := r' := a - (\sum_{j\neq i}B_j\floor{x_j} + B_i\nint{x_i})
\end{align*}
for $Bx=a$ and some $i$ such that $\{x_i\}\neq 0$. By Cramer's rule we get that the determinant decreases by a multiplicative value of at least $1/2$ in every iteration since $\frac{1}{2} \leq \abs{x_i - \nint{x_i}} = \abs{\frac{\det B'}{\det B}}$. 

In \autoref{fig:basis_exchange} the resulting basis for exchanging $B_1$ with $r = a \ \ (\bmod\  \Pi(B))$ and with $r'= a \ \ (\bmod'\  \Pi(B))$ shows that in both cases the volume of the parallelepiped decreases, which is equal to the determinant of the lattice. In \autoref{fig:application}, an example of our algorithm is shown.
\begin{figure}[t]
   \centering
      \subfloat[Application of our algorithm, $r'$ is the first remainder.
      \label{fig:alg_0}]{{\small
    \includegraphics[width=0.30\textwidth]{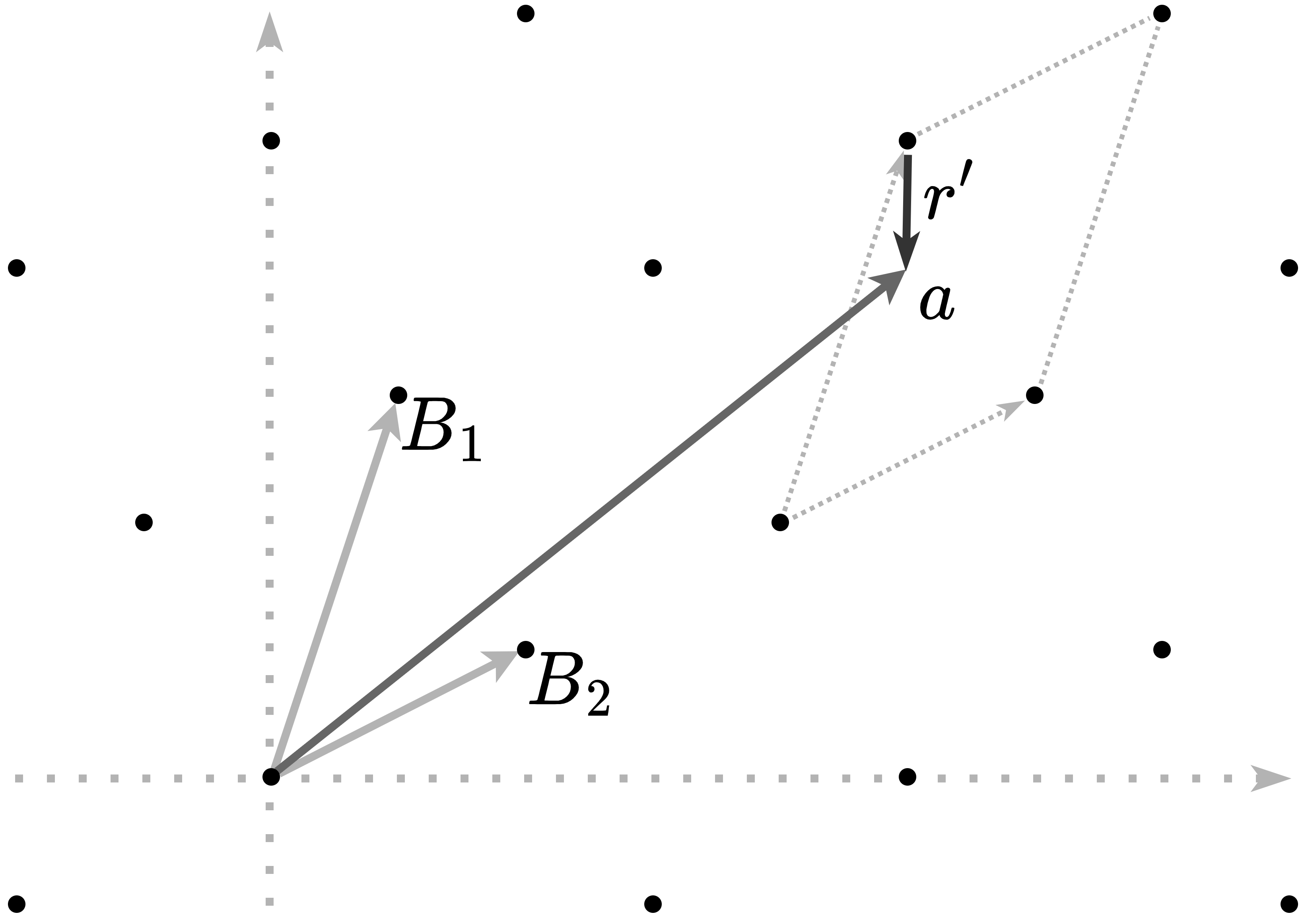}}} 
    \qquad
      \subfloat[Vectors $r'$ and $B_1$ were exchanged and $r''$ denotes the second remainder.
      \label{fig:alg_1}]{{\small
    \includegraphics[width=0.30\textwidth]{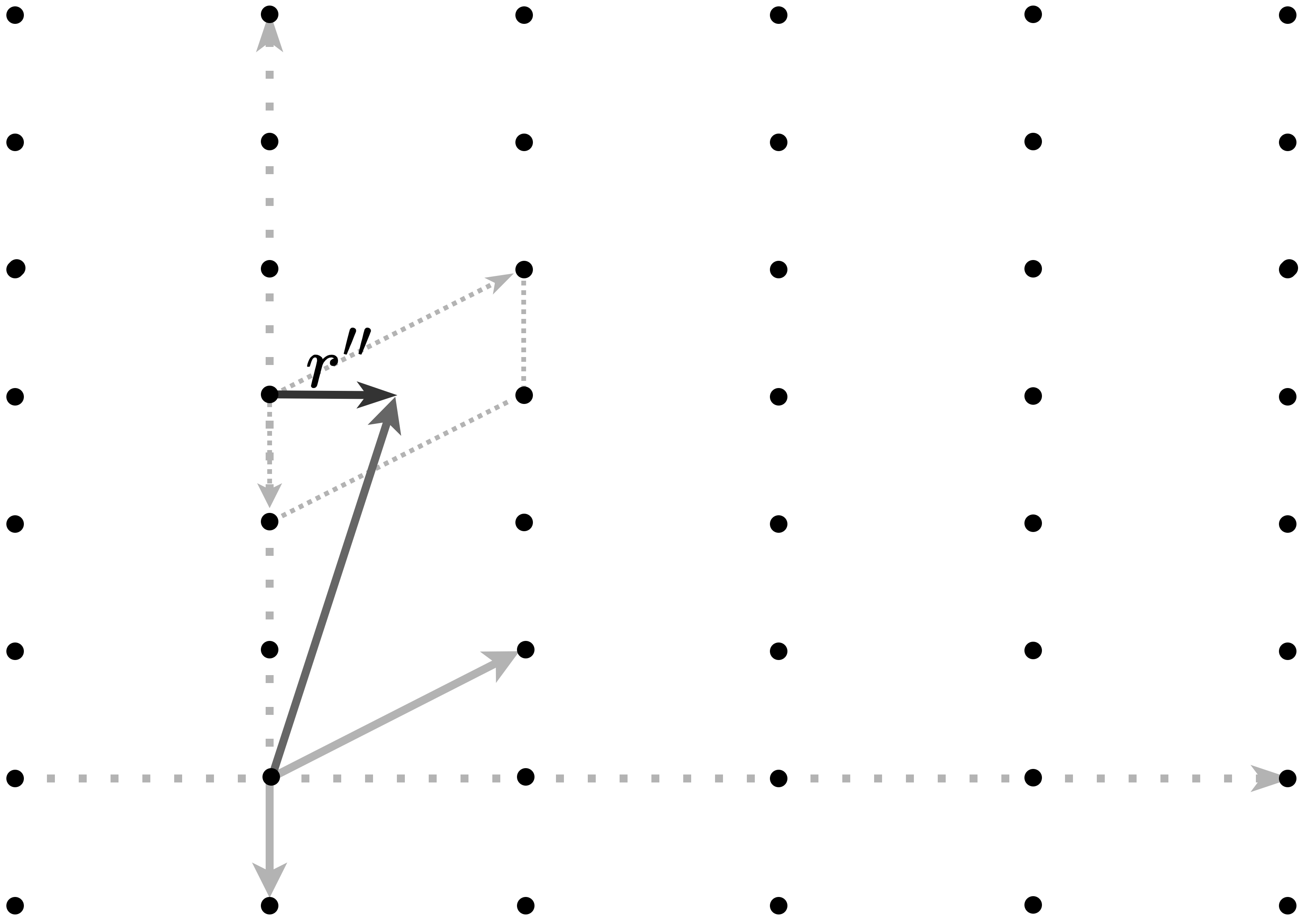}}}\qquad
      \subfloat[Vectors $r'$ and $B_2$ were exchanged. $B_2$ is in the lattice and the algorithm terminates.
      \label{fig:alg_2}]{{\small
    \includegraphics[width=0.30\textwidth]{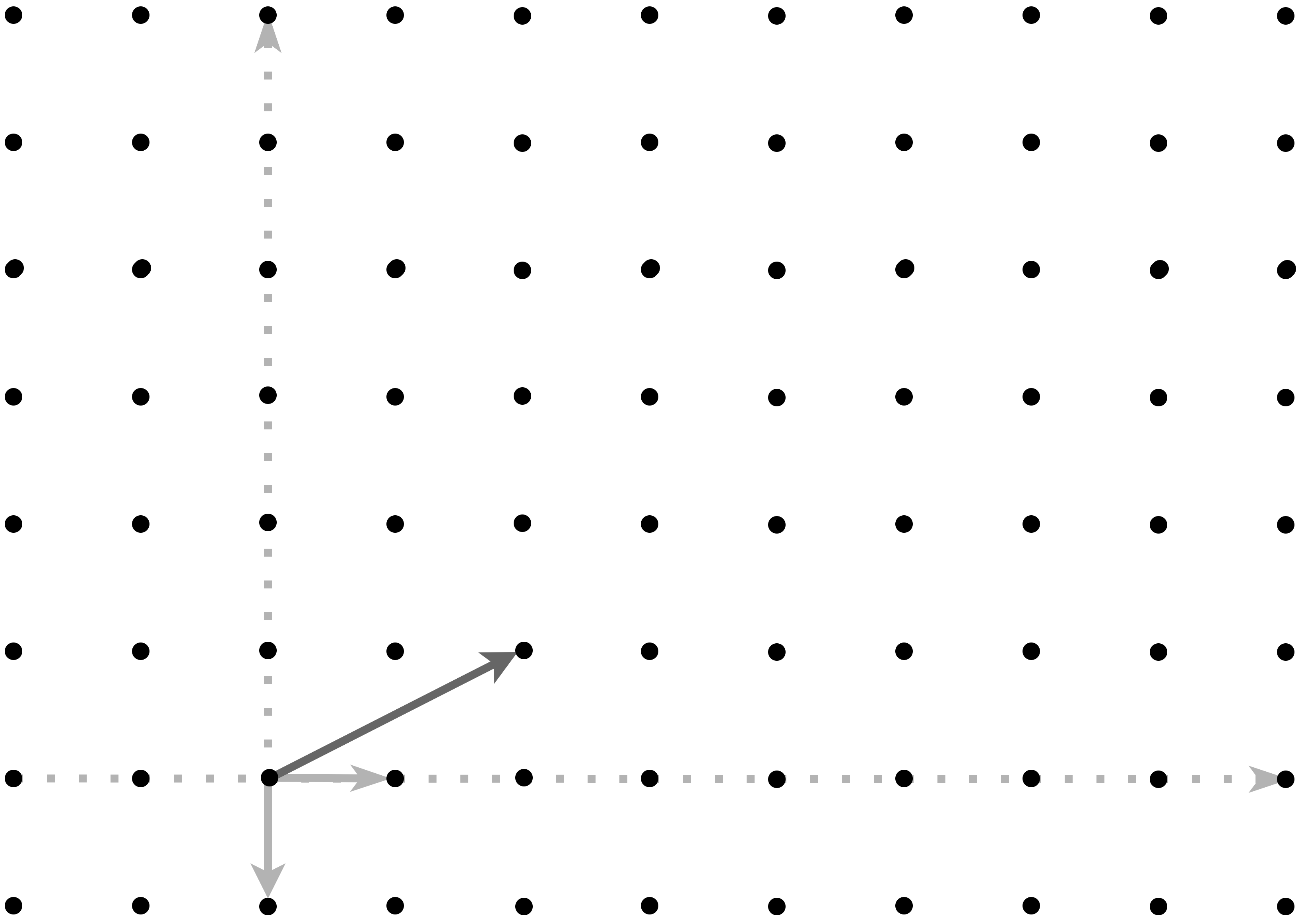}}}
   \caption{An application of the algorithm.}
    \label{fig:application}
\end{figure}


\subsection{Formal Description of the Algorithm}
In the following, we state the previously described algorithm formally.
\begin{algorithm}
  \caption{Generalized Euclidean Algorithm (Basic Algorithm)
    \label{alg:easyversion}}
  \begin{algorithmic}[1]
    \Statex
    \textsc{Input: } A matrix $A=\left(A_1,\ldots, A_n \right) \in \zz^{\dd\times n}$ \\
    \textbf{find} independent vectors $B := \left(B_1, \ldots, B_\dd\right)$ with $B_i\in \{A_1,\ldots, A_n\}$
    \Let{$C$}{$\{A_1,\ldots, A_n\}\setminus \{B_1,\ldots, B_\dd\}$}
    \While{$C\neq \emptyset$}
      \State{\textbf{choose any} $c\in C$}
      \State{\textbf{solve} $Bx=c$ }
      \If{$x$ is integral}
      \Let{$C$}{$C\setminus\{c\}$}
      \Else
      \State{\textbf{choose} index $\ell \leq d$ with $\{x_\ell\}\neq 0$}
      \Let{$C$}{$C\setminus\{c\}\cup \{B_\ell\}$}
      \Let{$B_\ell$}{$c - (B_\ell \nint{x_\ell} + \sum_{j\neq \ell}B_j\floor{x_j})$}
      
      \EndIf
    \EndWhile
    \State \Return{$B$}
  \end{algorithmic}
\end{algorithm}

\begin{theorem}\label{theorem:basic_version_correctness}
\autoref{alg:easyversion} computes a basis for the lattice $\mathcal{L}(A)$.
\end{theorem}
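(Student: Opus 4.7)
The plan is to prove the theorem by maintaining two loop invariants and one monovariant. The invariants are: (I1) the $d$ columns of $B$ are linearly independent integer vectors throughout the execution, and (I2) $\mathcal{L}(B \cup C) = \mathcal{L}(A)$. The monovariant is $|\det(B)|$, which I will show strictly decreases by a factor of at most $1/2$ in every iteration that falls into Case~2. Both invariants hold at initialization, since the line \textbf{find} returns $d$ linearly independent columns of $A$ (e.g.\ via \autoref{lem:maximal_subsystem}) and $B \cup C$ is, as a multiset, exactly the columns of $A$.

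For the inductive step, consider an iteration in Case~2 with $x = B^{-1}c$, chosen index $\ell$, and updated column $r' = c - B_\ell \nint{x_\ell} - \sum_{j \neq \ell} B_j \floor{x_j}$. Writing $r' = B\bigl(x - \nint{x_\ell} e_\ell - \sum_{j \neq \ell} \floor{x_j} e_j\bigr)$ and letting $B^{\mathrm{new}}$ denote $B$ with column $B_\ell$ replaced by $r'$, Cramer's rule yields
\[
\frac{\det(B^{\mathrm{new}})}{\det(B)} \;=\; x_\ell - \nint{x_\ell}.
\]
Since $\{x_\ell\} \neq 0$, the right-hand side is nonzero and has absolute value at most $1/2$, so $B^{\mathrm{new}}$ remains non-singular (preserving (I1)) and $|\det(B^{\mathrm{new}})| \leq |\det(B)|/2$. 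For (I2), note that $r' \in \mathcal{L}(B \cup \{c\})$ by construction, and conversely $c = r' + B_\ell \nint{x_\ell} + \sum_{j \neq \ell} B_j \floor{x_j} \in \mathcal{L}(B^{\mathrm{new}} \cup \{B_\ell\})$. Because $B^{\mathrm{new}} \cup C^{\mathrm{new}}$ differs from $B \cup C$ only by swapping $c$ for $r'$ (the vector $B_\ell$ is merely moved from $B$ into $C$), it follows that $\mathcal{L}(B^{\mathrm{new}} \cup C^{\mathrm{new}}) = \mathcal{L}(B \cup C)$. In Case~1, $c = Bx$ with integral $x$, so $c \in \mathcal{L}(B)$ and removing $c$ from $C$ trivially preserves both invariants.

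Termination follows because $|\det(B)|$ is a positive integer at every step (by (I1) together with integrality of $r'$, which is an integer combination of integer vectors), is bounded below by $1$, and is halved in every Case~2 iteration; hence at most $O(\log|\det(B^{\mathrm{init}})|)$ Case~2 iterations occur, while each Case~1 iteration strictly decreases $|C|$. The while-loop therefore exits with $C = \emptyset$, at which point (I1) gives that $B$ is a square non-singular integer matrix and (I2) gives $\mathcal{L}(B) = \mathcal{L}(A)$, so $B$ is a basis of $\mathcal{L}(A)$. The main subtlety in this plan is coupling (I1) with the choice of $\ell$: Cramer's rule requires $B$ non-singular, while a fractional coordinate $\{x_\ell\} \neq 0$ is guaranteed to exist precisely because we are in Case~2, i.e.\ $B^{-1}c \notin \zz^d$; everything else reduces to bookkeeping on the two sets $B$ and $C$.
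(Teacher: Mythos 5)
Your proof is correct and takes essentially the same approach as the paper: the core argument is the lattice invariant $\mathcal{L}(B \cup C) = \mathcal{L}(A)$, preserved because each update replaces a column by an integer combination that is mutually expressible. The only difference is packaging — the paper relegates the Cramer's-rule halving argument for termination to a separate observation after the theorem and leaves non-singularity of $B$ implicit (it follows from full-dimensionality of $\mathcal{L}(A)$), whereas you fold both into the proof body as explicit invariant (I1) and monovariant.
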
 
\begin{proof}
Let us consider the following invariant.\medskip

\textit{Claim.} In every iteration $\mathcal{L}(A) = \mathcal{L}(B\cup C)$.

By the definition of $B$ and $C$ the claim holds in line 2. We need to prove that removing $c$ from $C$ in line $7$ and altering $B$ and $C$ in lines 9-11 do not change the generated lattice. In line 7 we found $c$ is an integral combination of vectors in $B$. Thus, every lattice point can be represented without the use of $c$ and $c$ can be removed without altering the generated lattice. In lines 9 and 10 a vector $c$ is removed from $B\cup C$ and instead a vector $c' = c - (\sum_{j\neq i}B_j\floor{x_j}  + B_i \nint{x_i})$ is added. By the definition of $c'$, the removed vector $c$ is an integral combination of vectors $c', B_1, \ldots, B_n$ and $c'$ is an integral combination of vectors $c, B_1, \ldots, B_n$. Using the same argument as above, this does not change the generated lattice. \medskip

The algorithm terminates when $C=\emptyset$. In this case $B$ is a basis of $\mathcal{L}(A)$, since by the invariant we have that $\mathcal{L}(B) = \mathcal{L}(B \cup C) = \mathcal{L}(A)$.

\end{proof}

The following observation holds as in each iteration the coefficient for the newly added row $B_\ell$ in l.11 equals $x_\ell - \lceil x_\ell \rfloor$ and hence in absolute value is bounded by $1/2$. By Cramer's rule, this implies that $\frac{\det B'}{\det B} \leq \frac{1}{2}$, where $B'$ is the new matrix being defined by interchanging the $\ell$-th column in l.11. Hence, $\det (B') \leq \frac{1}{2} \det (B)$ which implies the following observation.
\begin{observation}
\autoref{alg:easyversion} terminates after at most $\log\det(B)$ exchange steps, where $B$ is the matrix defined in l.1 of the algorithm.
\end{observation}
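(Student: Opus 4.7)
The plan is to track the absolute value $\abs{\det(B)}$ across the iterations of \autoref{alg:easyversion} and show that every pass through the exchange branch (lines 9--11) multiplies it by a factor of at most $1/2$. Since $B$ remains integral throughout, $\abs{\det(B)}$ is always a positive integer; combined with the halving, this bounds the number of exchange steps by $\log_2 \det(B)$.

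The core calculation is to express the updated $\ell$-th column in the coordinate system of the current basis. Writing $x\in\qq^\dd$ for the solution of $Bx=c$ computed in line 5 so that $c=\sum_j B_j x_j$, the new vector assigned to $B_\ell$ in line 11 equals
\begin{align*}
c - B_\ell\nint{x_\ell} - \sum_{j\neq\ell} B_j \floor{x_j}
\;=\; B_\ell\,(x_\ell - \nint{x_\ell}) \;+\; \sum_{j\neq\ell} B_j\,\{x_j\}.
\end{align*}
Let $B'$ denote the updated matrix. By multilinearity of the determinant (equivalently, Cramer's rule), replacing the $\ell$-th column of $B$ by the right-hand side scales $\det(B)$ by the coefficient of $B_\ell$, namely $x_\ell - \nint{x_\ell}$, since each other summand is a multiple of a column of $B$ that is still present in $B'$ and therefore contributes $0$ to the determinant. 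Hence $\det(B') = (x_\ell - \nint{x_\ell})\cdot \det(B)$.

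The exchange branch is entered only when $\{x_\ell\}\neq 0$, so $x_\ell\notin\zz$ and $0 < \abs{x_\ell - \nint{x_\ell}} \leq 1/2$. This yields both $\abs{\det(B')} \leq \tfrac{1}{2}\abs{\det(B)}$ and $\det(B')\neq 0$, so $B$ remains non-singular and line 5 stays well-defined in every subsequent iteration. Since $B$ stays integral, $\abs{\det(B)}$ takes positive integer values and cannot be halved more than $\log_2 \det(B)$ times, giving the claimed bound. The only subtlety worth flagging is precisely the non-singularity of $B$ throughout the execution, needed to make the linear solve in line 5 valid; this is handled inductively by the strict non-vanishing of $x_\ell - \nint{x_\ell}$ in every exchange step, starting from the non-singular basis chosen in line 1.
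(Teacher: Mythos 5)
Your proof is correct and takes essentially the same approach as the paper: both express the new $\ell$-th column in the old basis, invoke Cramer's rule (multilinearity) to get $\det(B') = (x_\ell - \nint{x_\ell})\det(B)$, bound this factor in absolute value by $1/2$, and conclude via integrality of $\det(B)$. Your explicit remark that $\det(B')\neq 0$ keeps $B$ non-singular throughout is a small but welcome detail the paper leaves implicit.
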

Note for the running time of the algorithm that in each iteration the algorithm solves a linear system $Bx = c$ as its main operation. Therefore, we obtain a running time of
\begin{align*}
        LS \cdot (\log (\det B) + (n-\dd)),
\end{align*}
where $LS$ is the time required to solve the linear system $Bx = c$. Note however, that the coefficient in $B$ might grow over each iteration. Since the new vector is contained in $\Pi(B)$ this growth can be bounded by $d \norm{B}$. Hence, the resulting matrix at the end of~\autoref{alg:easyversion} has entries of size at most $d^{\log (\det B)} \cdot \norm{B}$.

Using the Hadamard bound $\det(B) \leq d \norm{B}^d$ and using that $LS$ can be bounded by $\Tilde{O}(d^\omega \cdot \log \norm{C})$ for some matrix $C \in \zz^{\dd \times \dd}$ \cite{DBLP:conf/issac/BirmpilisLS19}, we obtain that~\autoref{alg:easyversion} requires at most 
\begin{align*}
    \Tilde{O}(d^{\omega+1} \log (\det B)\cdot(\log (\det B) + (n-d)) ) \leq \Tilde{O}(n d^{\omega+2} \cdot \log^3 (\norm{A}))
\end{align*}
many bit operations as a first naive upper bound for the running time of~\autoref{alg:easyversion}. We improve upon the running time as well as the size of the entries of the output matrix in the following section.

Note however, that~\autoref{alg:easyversion} does not necessarily require an exact fractional solution in l.5. It is sufficient to decide if an index of the fractional solution $x$ is in indeed fractional and the first fractional bit in order to decide on how to round. This makes an implementation of the algorithm very easy as an out of the box linear system solver can be used.

In an earlier version of the paper~\cite{klein2023old_paper}
we described methods and data structures on how~\autoref{alg:easyversion} can be modified to be more efficient.


\section{Speeding Up the Generalized Euclidean Algorithm}\label{sec:fast_euclidean}
In the generalized setting of the Euclidean algorithm as defined in the previous section, we have several degrees of freedom. First, in each iteration, we may choose an arbitrary vector $c \in C$. Secondly, we may choose any arbitrary index $i$ with $\{ x_i \}$ being fractional to pivot and iterate the algorithm. We can use this freedom to optimize two things:
\begin{quote}
  \begin{itemize}
    \item the running time of the algorithm and
    \item the quality of the solution, i.e. the size of the matrix entries in the output solution.
  \end{itemize}
\end{quote}

In the following, we study one specific rule of pivotization which allows us to update the solution space very efficiently. Moreover, it allows us to apply the Euclidean algorithm, or respectively Schönhage's algorithm~\cite{DBLP:journals/acta/Schonhage71}, as a subroutine. The main idea of the following algorithm is to pivot always the same index until it is integral for all $c \in C$. Proceeding this way allows us to update the solution very efficiently by applying the one-dimensional classical Euclidean algorithm. In the next iteration, the algorithm continues with one of the remaining fractional indices.

Consider a basis $B \in \zz^{\dd \times \dd}$, and a set of vectors $C \in \zz^{\dd \times (n-\dd)}$ which need to be merged into the basis. Let $X_c \in \qq^\dd$ with $B X_c = c$ be the solution vector for each $c \in C$. For a fixed index $\ell \leq \dd$, every entry equals $(X_c)_\ell = \frac{t_c}{\notd}$ for some numbers $t_c \in \zz$ which we call the \emph{translate} of vector $c$ and some $\notd \in \zz$ with $\notd \leq \det(B)$. Intuitively, the translate $t_c$ for a vector denotes the distance between vector $c \in C$ and the subspace $V_\ell$ which is generated by the columns $B \setminus B_\ell$. Clearly, if $t_c = 0$ then the vector $c$ is contained in the subspace $V_\ell$. How close can a linear combination of vectors in $C$ and $B_\ell$ come to the subspace $V_\ell$? The closest vector lies on the translate $g$ being the greatest common divisor of all $t_c$ and $\notd$, with $\notd$ being the translate of $B_\ell$. 

Let $S_\ell \in \lattice(A)$ be a vector on translate $g$. Three easy observations follow:
\begin{itemize}
    \item No vector in the lattice $\lattice(A)$ can be contained in the space between $V_\ell$ and its translate $S_\ell + V_\ell$.
    \item Having vector $S_\ell$ at hand, we can compute a vector on every possible translate of $V_\ell$ simply by taking multiples of $S_\ell$.
    \item Having a basis $B'$ for the sublattice $\lattice(A) \cap V_\ell$, the basis $B' \cup S_\ell$ generates the entire lattice $\lattice(A)$.
\end{itemize}

In order to compute a basis $B'$ for the sublattice $\lattice(A) \cap V_{\ell}$ the algorithm, continues iteratively with vectors being contained in the subspace $V_\ell$. To obtain those vectors, the algorithm subtracts from each vector $c \in C$ a multiplicity of the previous vector $t_{c'}$ such that the resulting vectors lie on the translate $0$, i.e. the subspace $V_{\ell}$.

In the example given in~\autoref{fig:fast_euclidean}, we consider the subspace generated by $B_1$. For the translates we obtain that $t = 9$ as $B_2$ is on the $9$-th translate. The vectors $C_1$ and $C_2$ are contained in the $6$-th and $4$-th translate respectively. Hence, the vector $Y_1 = B_2 - 2 C_1 + C_2$ is contained on the first translate.
\begin{figure}
    \hfill
    \begin{tikzpicture}[scale=0.5]
 
 \def\X{6}
 \def\Y{8}
 
 
 
 \foreach \x in {0,...,\X}{
   \foreach \y in {0,...,\Y}
     \node[draw,inner sep=0,circle, fill] at (\x,\y) {};
 }
 

 \draw[-] (0,0) -- (6,3) node[below] {$B_1$};
 \draw[-] (0,0) -- (1,5) node[left] {$B_2$};
 
 \draw[-] (6,3) -- ++(1,5);
 \draw[-] (1,5) -- ++(6,3);

 \draw[dotted] (1/9,5/9) -- ++(6,3);
 \draw[dotted] (2/9,10/9) -- ++(6,3);
 \draw[dotted] (3/9,15/9) -- ++(6,3);
 \draw[dotted] (4/9,20/9) -- ++(6,3);
 \draw[dotted] (5/9,25/9) -- ++(6,3);
 \draw[dotted] (6/9,30/9) -- ++(6,3);
 \draw[dotted] (7/9,35/9) -- ++(6,3);
 \draw[dotted] (8/9,40/9) -- ++(6,3);
 

 \fill[red] (2,4) circle (3pt) node[below] {$C_1$};
 \fill[red] (4,4) circle (3pt) node[below] {$C_2$};

 \fill[red] (1,1) circle (3pt) node[above] {$S_1$};

\end{tikzpicture}
    \hfill
    \begin{tikzpicture}[scale=0.5]
 
 \def\X{6}
 \def\Y{8}
 
 
 
 \foreach \x in {0,...,\X}{
   \foreach \y in {0,...,\Y}
     \node[draw,inner sep=0,circle, fill] at (\x,\y) {};
 }
 

 \draw[-] (0,0) -- (6,3) node[below] {$B_1$};
 


 \fill[red] (2,1) circle (3pt) node[above] {$C_{1}'$};
 \fill[red] (0,0) circle (3pt) node[above] {$C_{2}'$};


\end{tikzpicture}
    \hfill\quad
    \caption{Application of Algorithm~\ref{alg:fasteuclidean}}
    \label{fig:fast_euclidean}
\end{figure}
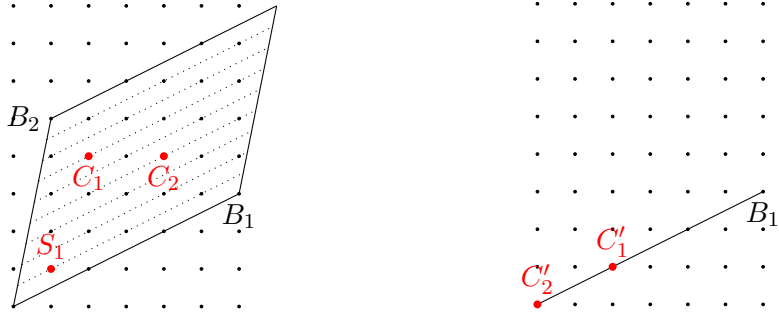
The algorithm then continues iteratively with the vectors $C_{1}' \equiv C_1 - 4 S_1 \pmod {\Pi(B)}$ and the vector $C_{2}' \equiv C_2 - 6 S_1 \pmod {\Pi(B)}$ being contained in the subspace $V_2$.

In the following, we state this algorithmic approach formally. For the correctness of the algorithm, 
one can ignore l.5 for now. Instead, one may assume that the algorithm chooses an arbitrary index $\ell$. The running time of the algorithm is analyzed later in~\autoref{sec:runtime_fast_euclid}. 

\begin{algorithm}
  \caption{Fast Generalized Euclidean Algorithm
    \label{alg:fasteuclidean}}
  \begin{algorithmic}[1]
    \Statex{
    \textsc{Input: } A matrix $A=\left(A_1,\ldots, A_n \right) \in \zz^{\dd\times n}$ }
    \State{\textbf{find} independent vectors $B := \left(B_1, \ldots, B_\dd\right)$ with $B_i\in \{A_1,\ldots, A_n\}$}
    \State{\textbf{let} $C$ be a matrix with columns $\{A_1,\ldots, A_n\}\setminus \{B_1,\ldots, B_\dd\}$}
    \State{\textbf{solve} $B X = C$}
    \For{$i=1$ to $\dd$}
        \State{\textbf{choose} index $\ell$ with maximum fractionality in matrix $X$}
       \State{\textbf{determine} the translate $\notd$ of $B_\ell$ and the translate $t_i$ of vectors $C = (C_1, \ldots , C_{n-d})$}
       \Statex{\hspace{2.65cm}with respect to the subspace $B \setminus B_\ell$}
        \For{$j=1$ to $n-d$} 
        \State{\textbf{compute} $g_j := gcd(t, t_1, \ldots , t_j)$ and factors $\alpha_j, \beta_j$ such that $g_j = \alpha_j g_{j-1} + \beta_j t_j$} 
       \State{\textbf{set} $Z_j = \alpha_j Z_{j-1} + \beta_j X_{C_j}$ with $Z_0 = e_{\ell}$}
       \State{\textbf{set} $X_{C_j} = g_{j-1}/g_j \cdot X_{C_{j}} - t_{j}/g_j \cdot Z_{j-1}$}
       \EndFor
       \State{\textbf{set} $Y_i = Z_{n-d}$}
    \EndFor
    \State \Return{Basis $S = B Y$, with matrix $Y$ containing $Y_i$ as the $i$-th column}
  \end{algorithmic}
\end{algorithm}

We want to give some intuition on the correctness of~\autoref{alg:fasteuclidean}. First, note that instead of solving $Bx =c$ in l.5 of \autoref{alg:easyversion} from scratch after each pivoting step, we can instead update the previous solution. The following rule updates the solution vector if $c$ is being exchanged with column vector $B_\ell$.
\begin{align}\label{rule:update}
    x_i = \begin{cases} \frac{1}{x_i} & i=\ell \\
    - \frac{x_i}{x_\ell} & i\not=\ell \end{cases}
\end{align}
This update rule can be easily seen as 
\begin{align*}
    \sum_{i=1}^d B_i x_i = c \Leftrightarrow -c + \sum_{i=1, i\not=\ell}^d B_i x_i  = - B_\ell x_\ell \Leftrightarrow
    \frac{1}{x_\ell}c - \sum_{i=1, i\not=\ell}^d B_i \frac{x_i}{x_\ell}  = B_\ell
\end{align*}
Using the update rule~(\ref{rule:update}), we can now track how the solution $x$ changes over each iteration if one always pivots the same index $\ell$ and hence exchanges the residue of vector $c$ with column vector $B_\ell$ until $x_\ell$ is integral. Let $x_\ell = \frac{a}{b}$ then $x_\ell$ is being changed in the following way:
\begin{itemize}
    \item Modulo Computation: $B x' = \{c \}$ implies $x'_{\ell} = x_\ell - \lfloor x_\ell \rfloor = \frac{R_1}{b}$ with $R_1 = a \mod b$.\footnote{Here we round down instead of to the closest integer for easier notation. Note that both versions are equivalent.}
    \item Exchange Step: $B'x'' = B_\ell$ implies $x''_{\ell} = \frac{1}{x'_{\ell}}= \frac{b}{R_1}$, where $B' = B\setminus \{B_\ell\} \cup \{\{c\}\}$.
\end{itemize}
Iterating the above process, the numerator and denominator evolve as follows
\begin{align*}
        a &= f_1 b + R_1\\
        b &= f_2 R_1 + R_2\\
        & \ \, \vdots \\
        R_{r-1} &= f_r R_{r} + R_{r+1}\\
        R_{r} &= f_{r+1}R_{r+1} + 0 
    \end{align*}
for residues $R_2, \ldots, R_{r+1} \in \zz$ and factors $f_1, \ldots , f_{r+1} \in \zz$ with $g = R_{r+1}$ being the gcd of $a$ and $b$. This is equivalent to performing an the classical Euclidean algorithm applied to $a,b \in \zz$.
Consequently, Algorithm~\ref{alg:easyversion} performs equivalently to the classical Euclidean algorithm with respect to $x_\ell = \frac{a}{b}$ in the case that one is always pivoting the same index $\ell$ until $x_\ell$ is integral.

Now, the same execution of the Euclidean algorithm can be applied to the vector $x \in \qq^d$ with $Bx = c$ and $e_\ell$ (being the $\ell$-th unity vector) instead of the numbers $a,b \in \zz$ -- obtaining identical factors $f_1, \ldots f_{r+1} \in \zz$. However, instead of the respective residues $R_{1}, \ldots R_{r+1} \in \zz$ we obtain fractional residue vectors $x[1], \ldots , x[r+2]$ with $Bx[i]$ being the vector lying on translate $R[i]$. We obtain the following equivalence chain
\begin{align}\label{eq:chainvectors}
        x &= f_1 e_\ell + x[1] \nonumber\\
        e_\ell &= f_2 x[1] + x[2]\\
        & \ \, \vdots \nonumber \\
        x[r-1] &= f_r x[r] + x[r+1]\nonumber \\
        x[r] &= f_{r+1}x[r+1] + x[r+2] \nonumber
\end{align}
with $x_\ell[i] = \frac{R_i}{R_{i-1}}$. Vector $x[i]$ is the solution vector that we would obtain after by applying Algorithm~\ref{alg:easyversion} after $i$ iterations.
Hence the solution vector $Bx[r+1]$ lying on the gcd translate $g$ is the vector that we obtain after pivoting by the same index $\ell$ until it is integral. This is the vector we are actually interested in and here is the way on how we can compute it efficiently: Instead of performing the equivalence chain of vectors in~(\ref{eq:chainvectors}), we can apply the Extended Euclidean algorithm to obtain factors $\alpha, \beta \in \zz$ such that the gcd $g = R_{r+1} = \alpha a + \beta b$. We can hence obtain the vector $x[r+1]$ by 
\begin{align*}
    x[r+1] = \alpha x + \beta e_\ell.
\end{align*}
Consequently, the vector $x[r+2] = b/g\ x - a/g\ e_\ell \in \qq^d$ is the vector lying on translate $0$.

Iterating this process over all vectors $c \in C$ and and all fractional indices $\ell$ yields the following algorithm. As we have argued, its correctness is already implied by the correctness of Algorithm~\ref{alg:easyversion} since we only simulate several iterations at once.

While the correctness of the algorithm can be seen by the argumentation above, in the following we give an additional proof which is more formal and does not rely on the correctness of Algorithm~\ref{alg:easyversion}.
\begin{theorem}\label{thm:fast_euclid_correct}
    ~\autoref{alg:fasteuclidean} returns a basis $S \in \zz^{\dd \times \dd}$ with \begin{align*}
        \lattice (A) = \lattice (S),
    \end{align*}
    where $S = BY$ is the solution matrix returned at the end of~\autoref{alg:fasteuclidean}.
\end{theorem}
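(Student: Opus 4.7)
The plan is to prove correctness by induction on the iteration index $i$ of the outer for-loop, with the invariant that at the beginning of iteration $i$ the set
\[
G_{i-1} := \{BY_1,\ldots,BY_{i-1}\} \cup \{B_k : k \notin I_{i-1}\} \cup \{BX_{C_j}^{(i-1)}\}_{j=1}^{n-d}
\]
generates $\lattice(A)$. Here $I_{i-1} = \{\ell_1,\ldots,\ell_{i-1}\}$ collects the previously pivoted indices, $B_k$ refers to the (unchanged) $k$-th column of the initial basis, and $X^{(i-1)}$ is the state of the solution matrix after iteration $i-1$. The base case $i=1$ holds because $BX_{C_j}^{(0)} = C_j$, so $G_0$ is exactly the column set of $A$.

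The crucial step in the inductive argument is to recognize the inner for-loop as a sequence of unimodular column operations on the tuple
\[
(v_0,v_1,\ldots,v_{n-d}) := (B_{\ell_i},\, BX_{C_1}^{(i-1)},\, \ldots,\, BX_{C_{n-d}}^{(i-1)}).
\]
Inner step $j$ replaces the pair $(v_0,v_j)$ by $(\alpha_j v_0 + \beta_j v_j,\ (g_{j-1}/g_j)v_j - (t_j/g_j)v_0)$, which is right-multiplication by
\[
M_j = \begin{pmatrix} \alpha_j & -t_j/g_j \\ \beta_j & g_{j-1}/g_j \end{pmatrix}.
\]
Since $g_j = \gcd(g_{j-1}, t_j)$ divides both $g_{j-1}$ and $t_j$, $M_j$ has integer entries, and $\det(M_j) = (\alpha_j g_{j-1}+\beta_j t_j)/g_j = 1$, so $M_j$ is unimodular and the lattice generated by the tuple is preserved. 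After all $n-d$ inner steps the tuple has become $(BY_i,\,BX_{C_1}^{(i)},\,\ldots,\,BX_{C_{n-d}}^{(i)})$, and combined with the unchanged vectors $\{BY_k\}_{k<i} \cup \{B_k : k \notin I_i\}$ this is exactly $G_i$. Hence $\lattice(G_{i-1}) = \lattice(G_i)$, closing the induction.

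To finish, I would run an auxiliary induction showing that $X_{C_j}^{(i)}$ vanishes in every coordinate of $I_i$. For $k \in I_{i-1}$ both $X_{C_j}^{(i-1)}$ and $Z_0 = e_{\ell_i}$ are zero in position $k$, so the recurrence $Z_j = \alpha_j Z_{j-1} + \beta_j X_{C_j}^{(i-1)}$ propagates the zero through every $Z_j$, and the line-10 update then preserves it in $X_{C_j}^{(i)}$. For $k = \ell_i$, a short induction gives $(Z_{j-1})_{\ell_i} = g_{j-1}/\notd$ (base case $(Z_0)_{\ell_i}=1=\notd/\notd$, inductive step using the Bezout relation $g_j = \alpha_j g_{j-1} + \beta_j t_j$), and combined with $(X_{C_j}^{(i-1)})_{\ell_i} = t_j/\notd$ the update yields $(X_{C_j}^{(i)})_{\ell_i} = (g_{j-1}t_j - t_j g_{j-1})/(g_j \notd) = 0$. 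Applying the main invariant at $i = d$ with $I_d = [d]$ gives $X_{C_j}^{(d)} = 0$, hence $BX_{C_j}^{(d)} = 0$, so $\lattice(G_d) = \lattice(BY_1,\ldots,BY_d) = \lattice(S)$, which equals $\lattice(A)$. Non-singularity of $S$ follows because the same support analysis makes $Y$ lower triangular after permuting its rows by $(\ell_1,\ldots,\ell_d)$, with nonzero diagonal entries $g_{n-d}^{(i)}/\notd^{(i)}$; hence $Y$ is invertible and so is $S = BY$.

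The main obstacle will be the bookkeeping on the coordinate supports of $Z_j$ and $X_{C_j}^{(i)}$ across iterations, since these zero-coordinates must (a) accumulate monotonically so that $X_{C_j}^{(d)}$ is entirely killed, and (b) remain absent from the support of every future $Z_{j-1}$ so that subsequent pivots do not reintroduce them. Once this is in place, the rest of the proof reduces to the unimodularity of the $2 \times 2$ integer matrices $M_j$, which is a direct Bezout computation.
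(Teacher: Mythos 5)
Your proof is correct and follows the same inductive architecture as the paper's: an invariant tracking the lattice generated by the current column set, combined with a support bookkeeping argument showing that the pivoted coordinate $\ell_i$ is eliminated from every solution column after iteration $i$ (the paper's Observation 1). Where you genuinely depart is in the proof of lattice preservation across the inner loop. The paper's Observation 3 establishes $\lattice(C_j[i-1]\cup G_{j-1}[i-1]) = \lattice(C_j[i]\cup G_j[i])$ by reconstructing the entire Euclidean remainder chain from the final two residues and arguing that every earlier vector is an integral combination of the next two — a correct but somewhat informal verbal argument. You instead factor each inner step as right-multiplication by the explicit $2\times 2$ matrix $M_j=\begin{pmatrix}\alpha_j & -t_j/g_j\\ \beta_j & g_{j-1}/g_j\end{pmatrix}$, observe that $g_j\mid g_{j-1}$ and $g_j\mid t_j$ give integer entries, and compute $\det M_j = (\alpha_j g_{j-1}+\beta_j t_j)/g_j = 1$ by B\'ezout, so $M_j$ is unimodular and the generated lattice is invariant. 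This is cleaner, fully self-contained, and reveals the inner loop as a sequence of elementary unimodular column operations — a perspective the paper never states. You also derive nonsingularity of $S$ explicitly from the lower-triangular shape of $Y$ with nonzero diagonal entries $g_{n-d}^{(i)}/\notd^{(i)}$, whereas the paper obtains it only implicitly from $\lattice(S)=\lattice(A)$ being full rank. One small point to tighten: the identities $(X_{C_j}^{(i-1)})_{\ell_i}=t_j/\notd$ and the base case $(Z_0)_{\ell_i}=g_0/\notd$ rest on $g_0=\notd$ being the common denominator of the $\ell_i$-th entries; this is how the paper defines translates, but it is worth stating explicitly to close the auxiliary induction.
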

\begin{proof}
    Without loss of generality, assume that the algorithm pivots index $i$ in the $i$-th iteration.
    Let $B^{(i)}\in \zz^{d \times (\dd-i+1)}$ be the basis matrix containing column vectors $B_i, \ldots , B_\dd$, i.e. 
    \begin{align} \label{def:Bi}
        B^{(i)} = \begin{pmatrix}
            B_i, \ldots, B_\dd
        \end{pmatrix}.
    \end{align}
    Furthermore, let $C[i]$, be the image of the solution vectors $X_c$ for each $c \in C$ as defined at the beginning of the $i$-th iteration of the algorithm, i.e.
    \begin{align} \label{def:Ci}
        C[i] = \{ B (X_c[i]) \mid c \in C \},
    \end{align}
    where $X_c[i] \in \qq^d$ is the vector defined at l.3 and updated in l.10 in the $i$-th iteration of the algorithm. By definition it holds that $C[0] = C$. Similarly, let $G[i]$ be the image of $Z_j$ as defined in the $i$-th iteration in l.9, \ie{} $G[i] = \{BZ \mid Z = (Z_1, \ldots, Z_{n-d})\}$ and, for simplified notation let $G_0[i] = B_i$.

    As a first observation, note that the vector $G_j[i]$ lies on translate $g_{j}$ (as defined in the $i$-th iteration). This implies that the vectors in $C[i]$ belong to the subspace generated by vectors in $B^{(i+1)}$. This is because $C_j[i]$ is on translate $t_j$ and the vector $G_{j-1}[i]$ is on translate $g_{j-1}$. Hence the resulting vector $C_j[i+1] = g_{j-1}/g_j \cdot X_{C_{j}[i]}[i] - t_{j}/g_j \cdot Z_{j-1}$ must be on translate $0$.

    \textbf{Observation 1:} Each Vector in $C[i]$ belongs to the subspace generated by vectors in $B^{(i+1)}$.

    The following observation follows directly using that $Y_i$ is defined by the last gcd translate $G_{n-d}[i]$:
    
    \textbf{Observation 2:} Each vector $S_i = B Y_i$ belongs to the subspace generated by vectors in $B^{(i)}$ and is on translate $g_{n-d}$ being the gcd of all translates $t, t_1, \ldots , t_{n-d}$.

    \textbf{Observation 3:} $\lattice(C_{j}[i-1] \cup G_{j-1}[i-1]) = \lattice(C_j[i] \cup G_j[i])$.\\ 
    The inclusion $\lattice(C_{j}[i-1] \cup G_{j-1}[i-1]) \supseteq \lattice(C_j[i] \cup G_j[i])$ follows directly from the definitions in l.9 and l.10 as $G_j[i]$ is defined by an integral combination of $G_{j-1}[i]$ and $C_j[i-1]$. 
    For the other direction the argument follows similarly to \autoref{theorem:basic_version_correctness} as $C_j[i] \cup G_j[i]$ result from two consecutive steps in the Extended Euclidean algorithm on the translates as $G_j[i]$ is on the gcd of translates of $C_{j}[i-1]$ and $G_{j-1}[i]$ and $C_j[i]$ results from the following step for the zero-th translate. The factors $g_{j-1}/g_j$ and $t_{j}/g_j$, which we use to compute $C_j[i]$, are well-known to represent $0$ in the last step and of course the gcd is the second to last step. Now, since we have two consecutive steps in the Euclidean algorithm, we can restore all previous vectors (as integral combinations) using the remainder calculation as
    \begin{align*}
        C_j[i-1] &= f_1 G_{j-1}[i] + R_1\\
        G_{j-1}[i] &= f_2 R_1 + R_2\\
        & \ \, \vdots \\
        R_{r-1} &= f_r R_{r} + G_j[i]\\
        R_{r} &= f_{r+1}G_j[i] + C_j[i] 
    \end{align*}
    for some residue vectors $R_1, \ldots , R_{r} \in \zz^d$ and integral factors $f_1, \ldots , f_{r+1} \in \zz$ which are the quotients in the Extended Euclidean algorithm for $gcd(g_{j-1}, t_j)$. As every previous vector is an integral combination of the following two, we also get $C_j[i-1],G_{j-1} \in \lattice(C_j[i] \cup G_j[i])$ and thus the third observation holds. 

    Using the third observation iteratively for the inner for-loop in l.7-10, we get that $\lattice(B_i \cup C[i-1]) = \lattice(BY_{i} \cup C[i])$. Moreover, this implies $\lattice(B \cup C[0]) = \lattice(S \cup C[d])$ when we iterate the above over all $d$ iterations of the outer for-loop in l.4-11. From the first observation we have that $C[d]$ are all zero as the subspace $B^{(d+1)}$ is just zero. Hence, $\lattice(S) = \lattice(B \cup C) = \lattice(A)$.

\end{proof}

In the following we show that \autoref{alg:fasteuclidean} can be applied in a way such that all fractional entries in the solution matrices $X$ and $Y$ and vectors $Z_j$ (l.3 and l.9-11 of the algorithm) are being computed modulo $1$ (i.e. we cut off the integral part of each fractional entry). Non-zero integral numbers are being mapped to $1$. This modification has two purposes. The first is to bound intermediate numbers, allowing us to assume that each entry of the respective matrices consists of a fractional number $\frac{a}{b}$ with $a,b \in \zz$ and $|a| \leq |b|$. The second purpose is to bound the size of the resulting basis.

\begin{lemma}\label{lem:modulo_entries}
    Entries of the matrices $X$ and $Y$ and vectors $Z_j$ in \autoref{alg:fasteuclidean} can be computed modulo $1$ (as described above) without compromising the correctness of the algorithm.
\end{lemma}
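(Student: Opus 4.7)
The plan is to show in three steps that every reduction modulo $1$ performed during~\autoref{alg:fasteuclidean} preserves the lattice generated by the final output $S = BY$.

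First, I would verify that every update in the algorithm is $\mathbb{Z}$-linear in its operands, so that the reductions commute with the updates. In line~3, solving $BX = C$ and reducing $X_c$ to $\{X_c\}$ replaces $c$ by $c - B\lfloor X_c\rfloor \in c + \mathcal{L}(B)$, which leaves $\mathcal{L}(B\cup C)$ unchanged. Inside the inner loop, the updates $Z_j := \alpha_j Z_{j-1} + \beta_j X_{C_j}$ and $X_{C_j} := (g_{j-1}/g_j)\,X_{C_j} - (t_j/g_j)\,Z_{j-1}$ use integer coefficients (the former from the extended Euclidean algorithm, the latter because $g_j \mid g_{j-1}$ and $g_j \mid t_j$). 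Hence subtracting integer vectors from the operands and reducing modulo $1$ afterwards yields exactly $\{Z_j\}$ and $\{X_{C_j}\}$, i.e., the reductions commute with the updates. Moreover, the translate $t_j$ depends only on the fractional part of $(X_{C_j})_\ell$ modulo $1$, and since $g_{j-1} \mid t$, the gcds $g_j = \gcd(g_{j-1}, t_j)$ are unaffected by the reduction.

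Second, the reduced algorithm outputs a matrix $Y^{\mathrm{red}}$ whose entries are $\{Y_{ij}\}$ (with the convention that nonzero integers stay at $1$), where $Y$ denotes the output of the unreduced algorithm. Because each column $\tilde S_i := BY^{\mathrm{red}}_i$ differs from $S_i = BY_i$ by a vector in $\mathcal{L}(B) \subseteq \mathcal{L}(A)$, and $S_i \in \mathcal{L}(A)$ by~\autoref{thm:fast_euclid_correct}, we obtain $\tilde S_i \in \mathcal{L}(A)$, so $\mathcal{L}(\tilde S) \subseteq \mathcal{L}(A)$.

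Third, I would promote this inclusion to an equality via a determinant comparison. The pivot rule (choosing index $i$ in the $i$-th outer iteration, as in the proof of~\autoref{thm:fast_euclid_correct}) forces $Y$ to be upper triangular: $(Y_i)_k = 0$ for $k < i$, since after iteration $i$ every remaining $X_{C_j}$ lies in the subspace spanned by $B_{i+1}, \ldots, B_d$ and subsequent updates never reintroduce a nonzero component at position $i$. The diagonal entries $(Y_i)_i = g_{n-d}^{(i)}/t^{(i)}$ satisfy $g_{n-d}^{(i)} \mid t^{(i)}$, so they lie in $(0, 1]$: either they are strictly fractional (and reduction leaves them fixed) or they equal $1$ (and the convention keeps them at $1$). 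Hence $Y^{\mathrm{red}}$ is upper triangular with the same diagonal as $Y$, so $|\det(\tilde S)| = |\det(B)\,\det(Y)| = |\det(S)| = \det(\mathcal{L}(A))$. Combined with $\mathcal{L}(\tilde S) \subseteq \mathcal{L}(A)$, this gives $\mathcal{L}(\tilde S) = \mathcal{L}(A)$, proving the lemma.

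The main obstacle I expect is justifying why the convention "nonzero integers map to $1$" is exactly the right one: a plain reduction to the fractional part in $[0, 1)$ would collapse a diagonal entry of $Y$ equal to $1$ (which occurs precisely when $g_{n-d}^{(i)} = t^{(i)}$, i.e., when $B_i$ already lay on the shortest translate) to $0$, destroying the determinant and hence the basis property. Verifying that this convention, combined with the upper triangular structure inherited from the pivot strategy, exactly preserves the diagonal is the crux of the argument; everything else is routine linearity.
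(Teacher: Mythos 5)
Your proof takes a genuinely different route from the paper. The paper shows directly, iteration by iteration, that replacing each update by its entry-wise reduction does not change the lattice generated: it partitions the argument into indices $>i$, $<i$, and $=i$, and shows in each case that the reduced quantities generate the same sublattice as the unreduced ones. You instead establish only the one-sided containment $\lattice(\tilde S) \subseteq \lattice(A)$ and then close the gap with a determinant comparison, using the (lower; you wrote "upper" but the condition $(Y_i)_k = 0$ for $k<i$ is lower triangularity) triangular structure of $Y^{\mathrm{red}}$ and the fact that the reductions never touch the diagonal entries $g_{n-d}^{(i)}/t^{(i)} \in (0,1]$. This determinant step is an elegant shortcut and avoids the paper's case analysis.

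There is, however, a gap in your Part~1. The claim that "the reductions commute with the updates," so that $Y^{\mathrm{red}}_{ij}$ equals the entry-wise reduction of $Y_{ij}$, is not quite right, for two reasons. First, the reduction map $x \mapsto \{x\}$ with the side convention "nonzero integer $\mapsto 1$" is \emph{not} invariant under integer shifts ($\mathrm{red}(0)=0$ but $\mathrm{red}(1)=1$), so the "subtract an integer vector before or after" reasoning does not commute in general. Second, the Bézout coefficients $\alpha_j, \beta_j$ are computed from the translates $t_j$, and the translate of a reduced vector is $t_j \bmod t$ rather than $t_j$; while $\gcd(g_{j-1}, t_j) = \gcd(g_{j-1}, t_j \bmod t)$ because $g_{j-1}\mid t$, the extended Euclidean algorithm on $(g_{j-1}, t_j \bmod t)$ can return different coefficients, so the reduced $\hat Z_j$ need not be the entry-wise reduction of $Z_j$. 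Fortunately, your Parts~2 and~3 do not actually need the full commutativity: $\lattice(\tilde S)\subseteq\lattice(A)$ follows directly from the invariant that $B\hat Z_j$ and $B\hat X_{C_j}$ stay in $\lattice(A)$ (each update is an integer combination followed by a subtraction of an integer multiple of a column of $B$), and the triangular structure and diagonal of $Y^{\mathrm{red}}$ can be read off the reduced run itself since the gcd chain, hence the entry $g_j/t$ at the pivot row, is preserved. If you replace the commutativity appeal by those two direct arguments, the determinant route goes through cleanly; as written, the reduction of the whole argument to "entry-wise reduction of $Y$" is the one step that does not hold.
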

\begin{proof}
    Regarding $X$ in l.3, it is easy to see that $\lattice(B \cup C) = \lattice(B \cup BX')$, where $X'$ is the matrix, where each entry $X'_{ij}$ is being set to $\{ X_{ij} \}$. This is due to the original vectors $C$ being representable by integral combinations of vectors of $B$ and $BX$ as $C = BX' + B(X - X')$.

    Without loss of generality, assume that the algorithm pivots index $i$ in the $i$-th iteration and let $B^{(i)}, X[i]$ and $C[i]$ be defined as in the proof of Theorem~\ref{thm:fast_euclid_correct}.

    Now it remains to show the claim for updates of $X$, $Z_j$, and $Y$ in l.9-11. Consider iteration $i$ of the loop and note that $X[i]$ has zero-entries in all rows that correspond to previous iterations. 

    We want to show that $\lattice(B^{(i+1)} \cup S_i \cup C[i+1]) = \lattice(\hat{B}^{(i+1)} \cup \hat{S}_i \cup \hat{C}[i+1])$ where the latter applies to the altered modulo in l.9-10 and the former does not. Since the vectors of $B^{(i+1)}$ are considered and added to the basis in  subsequent iterations, including them in the above equation is still a sufficient claim. The altered modulo applied to indices $> i$ do not change the lattice as they can be easily reconstructed using the vectors of $B^{(i+1)}$. For indices $<i$ we get that, since $X[i]$ is all zero in these rows and this also holds for $e_i$, that the vectors are a combination of these and thus also remain unchanged (since the altered modulo only applies to non-zero entries). 
    
    It remains to consider index $i$ for these vectors. Regarding vectors $C[i]$ we have shown in the proof of \autoref{thm:fast_euclid_correct} that the vectors are in the subspace of $B^{(i+1)}$. Hence, the $i$-th row of $X[i]$ is $0$ and remains unchanged. For $Z_j$ the $i$-th index is equal to $g_j / t = 1 / k$ for some integral $k$ as $g_j | t$ and $Z_j$ is calculated for the exact purpose of representing a vector on the translate, which is the gcd of the translates $t, t_1, \ldots, t_j$. Therefore, also the $i$-th entries remain unchanged and thus the lattice remains unchanged. 


\end{proof}

\subsection*{Size of the Solution Basis}
By Lemma~\ref{lem:modulo_entries} one can easily see that~\autoref{alg:fasteuclidean} returns a solution basis $S = BY$, with bounded entries. Recall that $B$ is the initially chosen matrix of maximum rank, then it holds that
\begin{align*}
    \norm{S} \leq d \norm{B} \leq d \norm{A}.
\end{align*}
This is because in~\autoref{alg:fasteuclidean}, each entry in the solution matrices $X$ and $Y$ is computed with the altered modulo. Since the matrices $X$ and $Y$ contain solution vectors, i.e. $B X_c = c$ and $B Y_i = S_i$ for some $c, S_i \in \lattice (A)$ this operation is similar to computing the respective vector $c$ or $S_i$ modulo $\Pi(B)$ but leaves vectors of $B$ untouched in case one of those are part of the solution basis.

In~\autoref{sec:root_n_solution_size} we present a postprocessing proceude which is applied the solution basis $S$ in order to obtain a basis $S'$ with a further improved size bound of $\norm{S'} \leq \sqrt{d}\norm{A}$.

\subsection{The Fractionality of the Parallelepipped} \label{sec:fractionality}
As one of the main operations used in~\autoref{alg:fasteuclidean}, we add and subtract 
 in l.9 and l.10 solution vectors $x \in \qq^\dd$.
In order to bound the bit complexity of~\autoref{alg:fasteuclidean} and explain the benefits of l.5 of the algorithm, we need a deeper understanding of the fractional representation of an integer point $b \in \zz^d$ for given basis $B \in \zz^{\dd \times \dd}$. Consider a solution of the linear system $x \in \qq^\dd$ with $Bx = b$. How large can the denominators of the fractional entries of $x$ get? We call the size of the denominators, the fractionality of $x$. Clearly by Cramer's rule, we know that the fractionality of $x$ can be bounded by $\det{B}$. However, there are instances where the fractionality of $x$ is much smaller than $\det{B}$. Consider for example the basis $B$ defined by vectors $B_i = 2 \cdot e_i$, where $e_i$ denotes the $i$-th unit vector. In this case, $\det{B}$ equals $2^\dd$. However, each integral point $b \in \zz^\dd$ can be represented by some $x\in \qq^\dd$ with denominators at most $2$.

For a formal notion of the fractionality, consider the parallelepipped $\Pi(B)$ for a basis $B \in \zz^{\dd \times j}$, we define the fractionality $\fract_B [i]$ of a variable $x_i$ to be the size of the denominator that is required to represented each integral point $b \in \Pi(B) \cap \zz^\dd$. More precisely, we define the fractionality as
\begin{align*}
    \fract_B [i] = \max_{b \in \zz^{\dd}} \{ z_i \mid x = B^{-1}b; x_i = \frac{y_i}{z_i}, \mathrm{gcd}(y_i,z_i) = 1 \}.
\end{align*}

The reason that some bases $B$ have a fractionality less than $\det(B)$ is that the faces in $\Pi(B)$ contain integral non-zero points. We show this relationship in the following lemma.\\

Exemplary, consider the parallelepiped for basis $B = \begin{pmatrix}
    6 &1 \\3 & 3
\end{pmatrix}$ in~\autoref{fig:translates_fractionality} with $\det(B)= 15$. The fractionality $\fract_B[2] = 5$, since the parallepepipped $\Pi(B \setminus B_2) = \Pi(B_1)$ contains $3$ integral points. On the other hand, $\fract_B [1] = 15$ as the subspace $\Pi(B \setminus B_1) = \Pi(B_2)$ contains only $0$ as an integral point.

\begin{figure}
    \hfill
    \subfloat[Translates of the Subspace $B \setminus B_2 = B_1$.\label{fig:translates_fractionality_1}]
    {\begin{tikzpicture}[scale=0.5]
 
 \def\X{6}
 \def\Y{6}
 
 
 
 \foreach \x in {0,...,\X}{
   \foreach \y in {0,...,\Y}
     \node[draw,inner sep=0,circle, fill] at (\x,\y) {};
 }
 

 \draw[-] (0,0) -- (6,3) node[below] {$B_1$};
 \draw[-] (0,0) -- (1,3) node[left] {$B_2$};
 
 \draw[-] (6,3) -- ++(1,3);
 \draw[-] (1,3) -- ++(6,3);

 \draw[dotted] (1/5,3/5) -- ++(6,3);
 \draw[dotted] (2/5,6/5) -- ++(6,3);
 \draw[dotted] (3/5,9/5) -- ++(6,3);
 \draw[dotted] (4/5,12/5) -- ++(6,3);
 

\end{tikzpicture}}
    \hfill
    \subfloat[Translates of the Subspace $B \setminus B_1 = B_2$.
    \label{fig:translates_fractionality_2}]
    {\begin{tikzpicture}[scale=0.5]
 
 \def\X{6}
 \def\Y{6}
 
 
 
 \foreach \x in {0,...,\X}{
   \foreach \y in {0,...,\Y}
     \node[draw,inner sep=0,circle, fill] at (\x,\y) {};
 }
 

 \draw[-] (0,0) -- (6,3) node[below] {$B_1$};
 \draw[-] (0,0) -- (1,3) node[left] {$B_2$};
 
 \draw[-] (6,3) -- ++(1,3);
 \draw[-] (1,3) -- ++(6,3);


    \foreach \x in {0,...,14}{
        \draw[dotted] (\x*6/15,\x*1/5) -- ++(1,3);
 }
 

\end{tikzpicture}}
    \hfill\quad
    \caption{Translates of the Subspaces}
    \label{fig:translates_fractionality}
\end{figure}
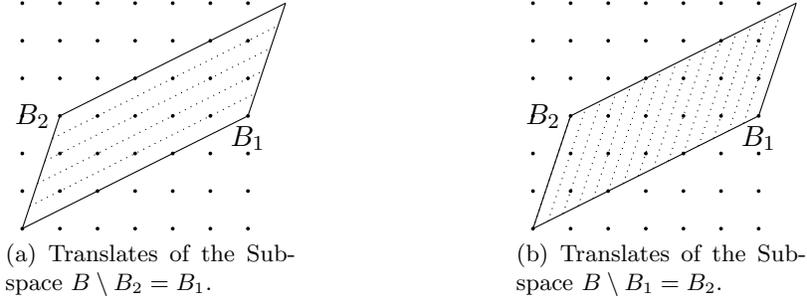

\begin{lemma}
    Let $B \in \zz^{\dd \times j}$ for some $j \leq \dd$ then
    \begin{align*}
        |\Pi(B) \cap \zz^\dd| = \fract_B [\ell] \cdot |\Pi(B \setminus B_\ell) \cap \zz^\dd|
    \end{align*}
    holds for every $\ell \leq j$.
\end{lemma}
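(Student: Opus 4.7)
My plan is to endow $\Pi(B) \cap \zz^d$ with the structure of a finite abelian group, construct a homomorphism onto a finite cyclic image whose order is exactly $\fract_B[\ell]$ and whose kernel is $\Pi(B\setminus B_\ell) \cap \zz^d$, and then apply the first isomorphism theorem. The group operation I have in mind is $b_1 \oplus b_2 := (b_1 + b_2) \bmod \Pi(B)$. Since $\Pi(B) \cap \zz^d$ is a complete system of coset representatives for $\lattice(B)$ inside $\zz^d \cap \linspan(B)$, this operation turns it into the finite abelian group $(\zz^d \cap \linspan(B))/\lattice(B)$, which is well-defined even when $j < d$.

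The homomorphism I would use is $\phi : \Pi(B) \cap \zz^d \to \rr/\zz$ defined by $\phi(b) := x_\ell + \zz$, where $x \in [0,1)^j$ is the unique coefficient vector with $B x = b$ (unique because the columns of $B$ are linearly independent). To see that $\phi$ respects $\oplus$, note that for $b_i = B x^{(i)}$ the reduction $(b_1 + b_2) \bmod \Pi(B)$ equals $B \{x^{(1)} + x^{(2)}\}$, whose $\ell$-th coefficient is exactly $(x^{(1)}_\ell + x^{(2)}_\ell) \bmod 1$.

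I would then identify the kernel and image of $\phi$ separately. The kernel consists of those $b \in \Pi(B) \cap \zz^d$ with $x_\ell = 0$, and by linear independence of the columns of $B$ these are exactly the integer points in the face $\Pi(B\setminus B_\ell)$, so $|\ker \phi| = |\Pi(B\setminus B_\ell) \cap \zz^d|$. The image is a finite subgroup of $\rr/\zz$, hence cyclic of some order $k$, generated by $1/k + \zz$; every element has the form $m/k + \zz$ whose denominator $k/\gcd(m,k)$ divides $k$, with $k$ attained (e.g.\ for $m=1$). Since the denominator of $x_\ell$ is invariant under integer shifts, this maximum coincides with the maximum denominator taken over all integer $b$, so $k = \fract_B[\ell]$. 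The first isomorphism theorem now yields
\begin{align*}
|\Pi(B) \cap \zz^d| \;=\; |\ker \phi| \cdot |\mathrm{image}(\phi)| \;=\; |\Pi(B\setminus B_\ell) \cap \zz^d| \cdot \fract_B[\ell].
\end{align*}

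The main subtlety is verifying that $\phi$ is both well-defined and a homomorphism with respect to $\oplus$, which reduces to checking that taking the fractional part of the $\ell$-th coefficient commutes with parallelepiped reduction; once this is in place, the kernel identification is immediate from the definition of $\Pi(B\setminus B_\ell)$, and the image-order identification is a standard fact about finite subgroups of $\rr/\zz$. A minor point is the non-square case $j < d$, which is handled by replacing $\zz^d$ with $\zz^d \cap \linspan(B)$ throughout so that the quotient group construction remains valid.
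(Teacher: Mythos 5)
Your proof is correct and follows a genuinely different route from the paper. The paper proceeds by a direct counting argument: it lets $f$ be the minimal nonzero value of the $\ell$-th coefficient over $\Pi(B)\cap\zz^d$, shows by a reduction argument that $1/f$ must be integral, and then exhibits an explicit bijection between the integer points of any two translates of the face $\Pi(B\setminus B_\ell)$ (obtained by adding a fixed point on translate $f$ and reducing modulo $\Pi(B)$), concluding that the $1/f$ translates each carry $|\Pi(B\setminus B_\ell)\cap\zz^d|$ points. Your approach instead packages the same underlying fibering into the first isomorphism theorem: you endow $\Pi(B)\cap\zz^d \cong (\zz^d\cap\linspan(B))/\lattice(B)$ with its natural group structure, exhibit the $\ell$-th-coefficient map $\phi$ as a homomorphism to $\rr/\zz$, identify $\ker\phi$ with $\Pi(B\setminus B_\ell)\cap\zz^d$, and read off the image size from the classification of finite subgroups of $\rr/\zz$. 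What your route buys is that the integrality of $1/f$ and the uniformity of the fibers come for free from group theory rather than needing to be argued by hand; the paper's route is more elementary and self-contained, constructing the bijection explicitly without invoking quotient groups or the isomorphism theorem. One small point you handle implicitly but correctly: the paper's displayed definition of $\fract_B[\ell]$ takes a maximum over all $b\in\zz^d$, and you correctly observe that this only makes sense (and is intended) over $b\in\zz^d\cap\linspan(B)$, and that reduction modulo $\Pi(B)$ preserves the denominator of $x_\ell$, so restricting the maximum to $\Pi(B)\cap\zz^d$ is legitimate.
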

\begin{proof}
    Let $L = |\Pi(B \setminus B_\ell) \cap \zz^\dd|$ and consider all points $p^{(1)}, \ldots , p^{(L)} \in \Pi(B \setminus B_\ell) \cap \zz^d$.
    
    Let $f \in (0,1]$ be the smallest number such that there exists a point $p' \in \Pi(B) \cap \zz^\dd$ with solution vector $x' \in \qq^j$ such that $B x' = p'$ and $x_\ell = f$. In the case that $f= 1$, all integral points $p \in \Pi(B) \cap \zz^\dd$ already belong to $\Pi(B \setminus B_\ell) \cap \zz^\dd$ and hence $\fract_B [\ell] = 1$ and the lemma holds.

    In the case that $f \in (0,1)$, observe first that $\frac{1}{f}$ must be integral. Otherwise, the point $\lceil \frac{1}{f} \rceil p \pmod {\Pi(B)} = B \{ \lceil \frac{1}{f} x' \rceil \}$ is a point in $\Pi \cap \zz^\dd$ with a smaller entry in the $\ell$-th component of the solution vector as $\lceil \frac{1}{f}  \rceil x'_{\ell} \pmod 1 < f$. Therefore, the parallelepipped $\Pi(B)$ contains exactly $1/f$ translates of the subparallepepipped $\Pi(B \setminus B_\ell)$ (including $\Pi(B \setminus B_\ell)$ itself). Consequently it holds that $\fract_B [\ell] = 1/f$.
    
    We will now show that each translate of $\Pi(B \setminus B_\ell)$ contains the same amount of integral points. For this, consider the points 
    \begin{align*}
        K \cdot p' + p^{(1)} \pmod {\Pi(B)}, \ldots , K \cdot p' + p^{(L)} \pmod {\Pi(B)}
    \end{align*} 
    \textbf{Observation 1}: Every point $K \cdot p' + p^{(i)}$ belongs to the translate $K$.\\
    The observation holds because $x^{(i)}_\ell = 0$ for every $x^{(i)}$ with $B x^{(i)} = p^{(i)}$.\\
    \textbf{Observation 2}: $K \cdot p' + p^{(i)} \pmod {\Pi(B)} \neq K \cdot p' + p^{(j)} \pmod {\Pi(B)}$ for $i \neq j$, i.e. all $L$ points on each translate are distinct.\\
    This observation holds as quality would imply that $p_i \equiv p_j \pmod {\Pi(B \setminus B_\ell)}$, which is a contradiction to the definition of the points.

    Conclusively, all integral points in $\Pi(B)$ are partitioned into exactly $1/f = \fract_B [\ell]$ translates each containing $L = |\Pi(B \setminus B_\ell) \cap \zz^\dd|$ integral points. This proves the Lemma.
\end{proof}
Iterating the above lemma over all subspaces yields the following corollary:
\begin{corollary}\label{cor:fractionality_sum}
Given is a (non-singular) basis $B \in \zz^{d\times d}$. Consider an arbitrary permutation $i_1, \ldots i_{\dd}$ of the indices $1, \ldots , \dd$ and let $C^{(j)}$ be the subspace defining matrix consisting of column vectors $B_{i_1} , \ldots , B_{i_j} \in \zz^d$, then
\begin{align*}
    |\det(B)| = |\Pi(B) \cap \zz^{\dd}| = \prod_{j=1}^{\dd} \fract_{C^{(j)}} [i_{j}].
\end{align*}
\end{corollary}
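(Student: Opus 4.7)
The plan is to obtain the corollary as a direct consequence of the previous lemma by a telescoping induction. First, I would recall the standard fact that for a non-singular integer matrix $B \in \zz^{d \times d}$, the number of lattice points in the half-open fundamental parallelepiped equals the absolute value of the determinant, i.e.\ $|\Pi(B) \cap \zz^{d}| = |\det(B)|$. This handles the first equality in the statement.

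For the second equality, I would observe that the parallelepiped $\Pi(B) = \{\sum_{i} \lambda_i B_i : \lambda_i \in [0,1)\}$ is invariant under permutation of its defining columns, so $|\Pi(C^{(d)}) \cap \zz^d| = |\Pi(B) \cap \zz^d|$. Now I would apply the previous lemma inductively, peeling off the most recently added column at each stage. Concretely, for each $j = d, d-1, \ldots, 1$, the lemma applied to the matrix $C^{(j)}$ with removed column $B_{i_j}$ yields
\begin{align*}
    |\Pi(C^{(j)}) \cap \zz^{d}| \;=\; \fract_{C^{(j)}}[i_j] \cdot |\Pi(C^{(j-1)}) \cap \zz^{d}|,
\end{align*}
since by construction $C^{(j)} \setminus B_{i_j} = C^{(j-1)}$.

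Telescoping this identity from $j=d$ down to $j=1$ gives
\begin{align*}
    |\Pi(B) \cap \zz^{d}| = |\Pi(C^{(d)}) \cap \zz^{d}| = \left(\prod_{j=1}^{d} \fract_{C^{(j)}}[i_j]\right) \cdot |\Pi(C^{(0)}) \cap \zz^{d}|.
\end{align*}
The base case $C^{(0)}$ is the empty matrix, so $\Pi(C^{(0)}) = \{0\}$ contains exactly one lattice point, and the trailing factor is $1$. Combined with $|\Pi(B)\cap\zz^d| = |\det(B)|$, this yields the claim.

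I don't expect a real obstacle here: the only subtlety is confirming that the lemma genuinely applies at each stage (each $C^{(j)}$ has full column rank $j$ since its columns are a subset of the columns of the non-singular basis $B$, hence linearly independent) and that the indexing convention $\fract_{C^{(j)}}[i_j]$ unambiguously refers to the fractionality of the column $B_{i_j}$ within the matrix $C^{(j)}$. Both are immediate from the setup.
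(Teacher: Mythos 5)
Your proof is correct and matches the paper's approach exactly: the paper dispatches this corollary with the single remark that it follows by ``iterating the above lemma over all subspaces,'' and your telescoping argument (peeling $C^{(j)}$ down to $C^{(j-1)}$ via $\fract_{C^{(j)}}[i_j]$, bottoming out at $\Pi(C^{(0)}) = \{0\}$) is precisely that iteration spelled out in full.
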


    

\subsection{The Bit Complexity of the fast Generalized Euclidean Algorithm} \label{sec:runtime_fast_euclid}
The main problem when analyzing the bit complexity of~\autoref{alg:fasteuclidean} is that it operates with solutions to linear systems. The solution matrices $X$ and $Y$ as defined in~\autoref{alg:fasteuclidean} however have entries with a potentially large fractionality. The fractionality can in general only be bounded by the determiant, which then can be bounded by Hadamard's inequality by $(d \norm{B})^d$. Therefore the bit representation of the fractioinal entries of the matrices have a potential size of $\Tilde{O}(\dd \cdot \log(\norm{A}))$. This would lead to a running time of $\Tilde{O}((n-d)\dd^3 \cdot \log(\norm{A}))$ for~\autoref{alg:fasteuclidean}.

However, in a more precise analysis of the running time of our algorithm, we manage to exploit the structural properties of the parallelepipped that we showed in the previous section. 
First, recall that in iteration $i$ of~\autoref{alg:fasteuclidean} the algorithm deals only with a solution matrix $X$ which contains solutions to a linear system $B^{(i)}x = c$, where $B^{(i)} $ is a matrix consisting of a subset of columns of matrix $B$.

In the previous section, we showed that the fractionality of solutions of the linear systems correlates with the number of integral points which are contained in the faces of the parallelepipped. The fractionality is low if there are many points contained in the respective faces $\Pi(B^{(i)})$.

Hence, on the one hand, if we have a basis which contains plenty of points in its respective subspaces $\Pi(B^{(i)})$ the algorithm is able to update the solution matrices $X$ and $Y$ rather efficiently as its fractionality must be small. On the other hand, if there are few points contained in the subspaces $\Pi(B^{(i)})$, updating $X$ and $Y$ requires many bit operations in the first iteration, however, since the number of points in the subspace is small, the fractionality of the subspaces is bounded. This makes the updating process in the following iterations very efficient.




\begin{theorem}\label{theo:bit_operations}
    The number of bit operations of~\autoref{alg:fasteuclidean} is bounded by
    \begin{align*}
        LS(\dd,n-\dd,\norm{A})  + \Tilde{O}((n-d)d^2 \cdot \log(\norm{A})),
    \end{align*}
    where $LS(\dd,n-\dd,\norm{A})$ is the time required to obtain an exact fractional solution matrix $X \in \qq^{\dd \times (n-\dd)}$ to a linear system $B X = C$, with matrix $B \in \zz^{\dd \times \dd}$ with $\norm{B} \leq \norm{A}$ and matrix $C \in \zz^{\dd \times (n-\dd)}$ 
\end{theorem}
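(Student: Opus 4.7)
The plan is to split the work of \autoref{alg:fasteuclidean} into the single invocation of the linear-system solver in line~3 and the total cost of the nested for-loops in lines 4--12. The former contributes exactly $LS(\dd,n-\dd,\norm{A})$ by definition, so the whole difficulty lies in showing that the updates in lines 6--11 amount to $\tilde{O}((n-\dd)\dd^{2}\log\norm{A})$ bit operations.

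I would begin by invoking \autoref{lem:modulo_entries}, so that every entry of $X$, $Y$, and of each intermediate vector $Z_j$ may be stored as a reduced fraction $a/b$ with $|a|\le|b|$. Denote by $F_i$ the fractionality $\fract_{B^{(i)}}[\ell]$ of the pivot index $\ell$ chosen in the $i$-th iteration of the outer loop, where $B^{(i)}$ is the submatrix of the current basis whose directions have not yet been eliminated. The crucial claim is that in this iteration every denominator appearing in the pivot row of $X$, in the scalars $g_j,\alpha_j,\beta_j$ produced by line~8, and in the coordinates of the vectors $Z_j$ manipulated by lines~9--10, divides $F_i$. Consequently every bit-arithmetic operation on these quantities costs $\tilde{O}(\log F_i)$. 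This is precisely the point of the maximum-fractionality pivot rule in line~5: it ensures that the row being reduced in the inner loop is the one whose fractionality actually governs the bit size of all active quantities.

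With this size bound in hand, bounding one pass of the inner loop becomes straightforward. Line~8 is a call to the extended Euclidean algorithm on integers of magnitude at most $F_i$, which costs $\tilde{O}(\log F_i)$ by Sch\"onhage's algorithm. Lines~9 and~10 are a constant number of scalar-times-vector updates on vectors of dimension $\dd$, every entry of bit length $\tilde{O}(\log F_i)$, giving $\tilde{O}(\dd\log F_i)$ per pass. Summing over the $n-\dd$ passes of the inner loop yields $\tilde{O}((n-\dd)\,\dd\,\log F_i)$ for the $i$-th outer iteration.

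Finally I would sum over the $\dd$ outer iterations to obtain $\tilde{O}\bigl((n-\dd)\,\dd\,\sum_{i=1}^{\dd}\log F_i\bigr)$. By \autoref{cor:fractionality_sum}, applied to the permutation of pivot indices actually chosen by the algorithm, $\prod_{i=1}^{\dd} F_i=\det(B)$, so $\sum_{i=1}^{\dd}\log F_i=\log\det(B)$, and Hadamard's inequality bounds this by $\tilde{O}(\dd\log\norm{A})$. Combined with the cost of line~3 this gives the claimed bound. The main obstacle I anticipate is justifying the denominator claim in the second paragraph: one must trace through how the modulo-$1$ reduction of \autoref{lem:modulo_entries}, together with the gcd-driven updates in line~10 and the max-fractionality pivot rule of line~5, forces the bit length of every quantity actually touched in iteration $i$ to be controlled by $\log F_i$ rather than by the much larger $\log\det(B)$, and that the non-pivot rows inherit exactly the right fractionality to sustain the argument in subsequent iterations.
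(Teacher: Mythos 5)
Your decomposition, key lemma (\autoref{cor:fractionality_sum}), use of Sch\"onhage's algorithm, and the telescoping $\sum_i\log F_i=\log\det B\le\tilde O(d\log\norm{A})$ all match the paper's proof, so the overall route is essentially identical. The issue is that the step you flag as ``the main obstacle'' is genuinely the heart of the argument, and you leave it unresolved. The paper closes it as follows: after iteration $i-1$, every column $x$ of $X[i-1]$ is, by construction (Observation~1 in the proof of \autoref{thm:fast_euclid_correct} and \autoref{lem:modulo_entries}), the solution of $B^{(i)}x=c$ for some \emph{integral} lattice point $c$ lying in the subspace spanned by $B^{(i)}$; by the definition of fractionality, the denominator in row $j$ of $x$ is then bounded by $\fract_{B^{(i)}}[j]$, and the max-fractionality pivot rule forces $\fract_{B^{(i)}}[j]\le F_i$ for all $j$. (Your word ``divides'' is also too strong --- e.g.\ the fractionalities in different rows need not divide one another --- but ``bounded by'' is all that is needed.) Without this, the claim that every operand in lines~8--10 has bit length $\tilde O(\log F_i)$ has no justification, and the whole bound collapses.

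Two smaller omissions worth noting. First, the initial conversion of $X[0]$ (produced in line~3, with raw denominators as large as $d\norm{A}^d$) into reduced coprime form itself costs $\tilde O((n-d)d^2\log\norm{A})$ via Sch\"onhage; this is a one-time term outside your per-iteration sum, and the same is true for restoring coprimality after each update in line~10 --- you should account for both. Second, you never address line~12 (the product $S=BY$): a naive bound there would introduce an extra factor of $d$, and the paper needs \autoref{lem:matrix_mult} together with \autoref{ob:bounded_representation_of_solution} to show that this multiplication costs only $\tilde O(d^\omega\log\norm{B})$, which is absorbed into $LS$.
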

\begin{proof}
    Throughout its execution the algorithm maintains a solution matrix $X \in \qq^{d \times (n-d)}$ of fractional entries. Equivalently, to the proof of~\autoref{thm:fast_euclid_correct}, let $X[i]\in \qq^{d \times (n-d)}$ be the solution matrix $X$ with values set as at l.10 of the $i$-th iteration and respectively $Z_j[i]$ be the vectors as defined in l.9 for every $j$ and iteration $i$. Let $C[i] \in \qq^{d \times (n-d)}$ be the matrix $C[i]$ as defined in (\ref{def:Ci}). Furthermore, we assume, that the algorithm pivots index $i$ in the $i$-th iteration and $B^{(i)} \in \zz^{d \times (n-\dd+1)}$ is the matrix containing columns $B_i, \ldots , B_\dd$. We define the fractionality of matrix $B^{(i)}$ to be the fractionality of the pivoted index $\ell = i$, i.e.
    \begin{align*}
        \fract(B^{(i)}) := \fract_{B^{(i)}}[i].
    \end{align*}
    For the matrices $X[i]$, $Y$, and vectors $Z_j[i]$, we maintain the property that each entry is of the form $\frac{a}{b}$ for some $a,b \in \zz$ with $gcd(a,b) = 1$ and $a < b$. By this we can ensure that $a,b < \fract(B^{(i)})$ in each entry of $X[i]$, $Y$, and $Z_j[i]$.
    
    For $X[0]$ being computed in l.3 of the algorithm, each numerator $a$ and denominator $b$ is bounded by the largest subdeterminant of $A$ and hence by Hadamard's inequality $a,b \leq d \norm{A}^d$. The property of the entries that $a<b$ and that $a,b$ being coprime can then be computed in time $\Tilde{O}((n-d)d^2 \log( \norm{A}))$ using Schönhage's algorithm~\cite{DBLP:journals/acta/Schonhage71} to compute the gcd.

    Consider now solution matrix $X[i-1]$ in the $i$-th iteration (before $X[i]$ is defined in l.10).
    In l.5 of the algorithm, the index $\ell$ with maximum fractionality in our current basis $B^{(i)} \in \zz^{d \times (n-\dd+1)}$ is being computed. To determine this index, we need to compute a common denominator for each row in $X[i-1]$.
    Hence, we compute for each column index $j$, the least common multiple (lcm) $L_j$ of all denominators in the $j$-th row of $X[i-1]$, i.e. the lcm of the denominator of the numbers $X_{jk}[i-1]$ for $1 \leq k \leq n-d$. Note that all $L_j$ can be computed in time $\tilde{O}((n-d)d \cdot \fract(B^{(i)}))$ using \cite{DBLP:journals/acta/Schonhage71} for every entry of the respective matrix.
    
    \textbf{Claim}: $L_j \leq \fract_{B^{(i)}}[j]$.\\
    The claim holds as in the solution vector $x$ with $Bx = p$ for point $p = \sum_{c \in C[i]} c \pmod {\Pi(B^{(i)})} \in \Pi(B^{(i)}) \cap \zz^d$, the $j$-th component $x_j$ equals $x_j = \sum_{k=1}^{n-d} X_{jk}[i] \pmod 1$ and hence can only be represented by a fractional number with denominator exactly $L_j$.


    Having computed the $L_i$, we can write the $i$-th component of each solution vector $X_c[i-1]$ by $\frac{a_c}{L_i}$, where $a_c$ is a divisor of the actual translate $t_c$ of vector $c \in C[i-1]$. In the same way, the translate $t$ of $B_i$ has to be a multiple of $L_i$. Therefore, the gcd computations $g_j = gcd(t,t_1, \ldots, t_j)$ in l.8 of the algorithm can be computed by $gcd(L_i,a_1, \ldots , a_j)$. Using the algorithm of Schönhage~\cite{DBLP:journals/acta/Schonhage71}, this can be done in time $\Tilde{O}((n-d) \cdot \log (\fract(B^{(i)}))$.
    
    For $X[i-1]$ with $i\geq1$ observe that each column vector $x$ in $X[i-1]$ is the solution vector of a linear system $B^{(i)} x = c$ for some $c \in C[i-1]$, the denominator $b \in \zz$ (and therefore also the numerator $a$) in each entry of $x$ is hence bounded by $\fract(B^{(i)})$. This implies that $Z_j[i]$ in l.9 and $X[i]$ in l.10 of the algorithm can be computed in time $\Tilde{O}((n-d)d \cdot \log(\fract(B^{(i)})))$.
    After the update of $X[i-1]$ in l.10 of the algorithm, the property of coprimeness of the rational numbers can be restored by computing the gcd of each entry in $X[i]$ and dividing accordingly. This requires a total time of $O((n-d)d \cdot \log(\fract(B^{(i)}))$ using the algorithm of Schönhage~\cite{DBLP:journals/acta/Schonhage71}.

    Summing up, the number of bit operations that are required over all $d$ iterations of the for-loop in~\autoref{alg:fasteuclidean} is bounded by
    \begin{align*}
        \Tilde{O}((n-d)d^2 \log( \norm{A})) + \sum_{i=1}^\dd \Tilde{O}((n-d)d \cdot \log(\fract(B^{(i)}))) \\
        = \Tilde{O}((n-d)d^2 \log( \norm{A})) + (n-d)d\sum_{i=1}^\dd \Tilde{O}(\log(\fract(B^{(i)}))).
    \end{align*}
    By~\autoref{cor:fractionality_sum}, we know that $\prod_{i=1}^d \fract(B^{(i)}) = |\Pi(B) \cap \zz^\dd| = \det(B)$ and hence the sum in the above term can be bounded by
    \begin{align*}
        \sum_{i=1}^\dd \Tilde{O}(\log(\fract(B^{(i)}))) = \Tilde{O}(\log(\det(B))) = \Tilde{O}(d \log (\norm{A})).
    \end{align*}
    Therefore, the running time required over all iterations in the for-loop of the algorithm is bounded by
    \begin{align*}
        \Tilde{O}((n-d)d^2 \log( \norm{A})).
    \end{align*}
    Furthermore, it holds that the matrix multiplication in l.12 of~\autoref{alg:fasteuclidean} can be upper bounded by $\tilde{O}(d^{\omega}\norm{A})$ using~\autoref{lem:matrix_mult} from the following section. This is because $Y_i \in [0,1)^d$, fractional, and with denominator dividing $\fract(B^{(i)})$. \autoref{lem:matrix_mult} computes the matrix multiplication for two \emph{integral} matrices. Therefore, we compute $Y_i' := \fract(B^{(i)}) \cdot Y_i \in \zz^d$ and get that $\norm{Y_i'} \leq \fract(B^{(i)})^2$. Clearly, this can be done in target complexity. Then we use \autoref{lem:matrix_mult} to compute $BY'$ with $k=1$. Again by \autoref{cor:fractionality_sum} we get that 
    \begin{align*}
        \nu := \sum_{i=1}^{d}\log(\norm{Y_i'}  +1) / \log\norm{B}   
        &\leq \sum_{i=1}^{d}\log(\fract(B^{(i)})^2 +1) / \log\norm{B} \\ 
        &\leq O(\log\det B)/ \log\norm{B} \\
        &=O(d\log(d)).
    \end{align*}
    The running time by the lemma is $\tilde{O}(\max\{\nu, d\} d^{\omega -1} \norm{B} ) = \tilde{O}(d^\omega\norm{B} )$. Finally, we divide each column again by  $\fract(B^{(i)})$ to obtain $S = BY$. Since in any case $LS(\dd,n-\dd,\norm{A}) \in \Omega(d^{\omega} \cdot \log \norm{A})$, we obtain that the running time for l.12 is swallowed by the linear system solving in l.3.
\end{proof}

We want to repeat a fact from the proof which we will need again later.
\begin{observation}\label{ob:bounded_representation_of_solution}
    The matrix $Y$ from l.11 has bounded entries as numerator and denominator are bounded by $\fract(B^{(i)})$ and with \autoref{cor:fractionality_sum} the sum of the encoding lengths of the maximum number in each column is bounded by
    \[\sum_{i=1}^\dd \Tilde{O}(\log(\fract(B^{(i)}))) = \Tilde{O}(d \log (\norm{A})).\]
\end{observation}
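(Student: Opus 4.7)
The plan is to verify the two parts of the claim separately: first, that every entry of column $Y_i$ is a reduced fraction $a/b$ with $a,b \le \fract(B^{(i)})$; second, that the sum of the logarithms of these bounds telescopes via~\autoref{cor:fractionality_sum} to $\tilde{O}(\dd \log \norm{A})$.

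For the first part, I would begin by invoking~\autoref{lem:modulo_entries} to ensure that all entries of $X$, $Y$, and the auxiliary vectors $Z_j$ are maintained in canonical reduced form with $\gcd(a,b)=1$ and $|a|<b$. Thus it suffices to bound only the denominator in each coordinate of $Y_i = Z_{n-\dd}$. I would then argue inductively along the inner for-loop of iteration $i$ that each $Z_j$ is a rational solution to $B^{(i)} z = p$ for some integer point $p \in \lattice(A) \cap \linspan(B^{(i)})$ (namely the point on translate $g_j$ with respect to the subspace spanned by $B^{(i)}\setminus B_i$), and analogously each column of $X[i-1]$ is a rational solution vector of $B^{(i)} x = c$. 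By the very definition of $\fract_{B^{(i)}}[j]$, the denominator of the $j$-th coordinate of any such solution is at most $\fract_{B^{(i)}}[j]$. Since line~5 of~\autoref{alg:fasteuclidean} pivots the row of maximum fractionality and renames it to index $i$, we have $\fract(B^{(i)}) = \fract_{B^{(i)}}[i] = \max_j \fract_{B^{(i)}}[j]$; combined with the reduced-form property this yields the bound $|a|,b \le \fract(B^{(i)})$ for every entry of $Y_i$.

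For the second part, observe that the indices pivoted across the $\dd$ iterations of the outer for-loop realize a permutation of $\{1,\ldots,\dd\}$, and that $B^{(i)}$ is formed precisely by the yet-unpivoted columns. This is exactly the setting of~\autoref{cor:fractionality_sum}, which therefore yields $\prod_{i=1}^{\dd} \fract(B^{(i)}) = |\det(B)|$. Taking logarithms and applying Hadamard's inequality $|\det(B)| \le \dd^{\dd/2}\norm{B}^{\dd} \le (\dd\norm{A})^{\dd}$ gives
\[
\sum_{i=1}^{\dd} \log \fract(B^{(i)}) \;=\; \log |\det(B)| \;=\; O(\dd \log(\dd\norm{A})) \;=\; \tilde{O}(\dd \log \norm{A}),
\]
which is the claimed total encoding-length bound.

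The only delicate point is justifying the identification $\fract(B^{(i)}) = \max_j \fract_{B^{(i)}}[j]$: one must show that the theoretically maximal fractionality of row $j$ is actually realised among the denominators visible to the algorithm in $X[i-1]$, so that line~5 really selects the maximum. As in the proof of~\autoref{theo:bit_operations}, this follows by considering the solution to $B^{(i)} x = \sum_{c \in C[i-1]} c \pmod{\Pi(B^{(i)})}$, whose $j$-th coordinate equals $\sum_{k} X_{jk}[i-1] \pmod 1$ and therefore witnesses the least common multiple of the denominators in row $j$; by the definition of $\fract_{B^{(i)}}[j]$ this lcm equals the row fractionality. Once this point is established, everything else reduces to the structural identity in~\autoref{cor:fractionality_sum} and the Hadamard bound, and no further calculation is needed.
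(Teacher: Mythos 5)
Your second half is exactly the paper's argument and is fine: the pivoted indices form a permutation, \autoref{cor:fractionality_sum} gives $\prod_{i=1}^{\dd}\fract(B^{(i)})=\det(B)$, and Hadamard yields $\sum_i\log\fract(B^{(i)})=\tilde{O}(\dd\log\norm{A})$. The gap is in the first half, at precisely the point you flag as delicate. You bound the denominators in row $j$ by the true fractionality $\fract_{B^{(i)}}[j]$ and then need $\fract(B^{(i)})=\fract_{B^{(i)}}[i]=\max_j\fract_{B^{(i)}}[j]$, i.e.\ that l.5 pivots the row of maximum \emph{true} fractionality. But l.5 only sees the denominators currently present in $X[i-1]$, i.e.\ the row lcms $L_j$, and your claimed identity ``$L_j=\fract_{B^{(i)}}[j]$'' does not hold in general, for two reasons. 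First, $\fract_{B^{(i)}}[j]$ is defined as a maximum over \emph{all} integral points of the subspace, whereas the columns of $X[i-1]$ only represent the current generators of the sublattice at hand; these can have strictly smaller denominators in row $j$ than the worst integral point of the subspace, so $L_j$ can be strictly smaller than $\fract_{B^{(i)}}[j]$. Second, the witness you invoke, $x_j=\sum_k X_{jk}[i-1]\pmod 1$, only shows that \emph{some} point has denominator dividing $L_j$; cancellation in the sum can make its reduced denominator smaller than $L_j$, so this point does not certify equality. Consequently the pivot chosen by maximal $L_j$ need not be the row of maximal fractionality, and with your per-row bound $\fract_{B^{(i)}}[j]$ the product over iterations would not telescope via \autoref{cor:fractionality_sum}, which applies to the pivoted indices only.

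The statement itself is still correct, and the repair is to argue as the paper does: every entry of $X[i-1]$ in row $j$ has denominator dividing the visible lcm $L_j$, the updates in l.8--l.10 use only integral coefficients $\alpha_j,\beta_j$ and $g_{j-1}/g_j,\,t_j/g_j$, so the denominators of $Z_j$ and of the updated $X$ keep dividing these lcms; by the pivot rule all of them are at most $L_i$, and one only needs the \emph{inequality} $L_i\le\fract_{B^{(i)}}[i]=\fract(B^{(i)})$ (the Claim in the proof of \autoref{theo:bit_operations}, which can be made robust by noting that the coordinate-$i$ values of solutions for integral right-hand sides form a finite cyclic subgroup of $\qq/\zz$, so every occurring denominator, and hence their lcm, divides $\fract_{B^{(i)}}[i]$). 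Combined with the reduced-form invariant $|a|<b$ from \autoref{lem:modulo_entries}, this gives the entry bound for column $Y_i$ by $\fract(B^{(i)})$, and then your telescoping argument finishes the proof.
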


\subsection*{Arithmetic Complexity}

We want to give a short overview of the arithmetic complexity. Lines 1 to 3 require $O(nd^{\omega -1})$ operations. Line 3 in particular can be computed by first computing $B^{-1}$ and then $X= B^{-1}C$. Roughly speaking, lines 5 to 8 are dominated by computing $O((n-d)d)$ gcd-like computations. While these computations might involve exponentially large numbers, we have seen in the proof of~\autoref{theo:bit_operations} that this can only happen in few iterations. By arguments as above the algorithm requires $\OTilde((n-d)d\log\norm A_\infty)$ arithmetic operations for lines 5 to 8. Lines 9 to 11 require $O((n-d)d)$ arithmetic operations and line 12 is only one matrix multiplication. We get the following bound on the arithmetic complexity.

\begin{corollary}
    \autoref{alg:fasteuclidean} computes a lattice basis of $A$ using $\OTilde(nd^{\omega - 1} + (n-d)d^2\log\norm{A}_\infty)$ arithmetic operations.
\end{corollary}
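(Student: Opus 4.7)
The plan is to follow the line-by-line accounting sketched in the paragraph preceding the corollary, making each step precise in the unit-cost arithmetic model (each $+,-,\times,\div$ on one number counts as one operation), and then sum. All of the non-trivial analytic work has already been done in the proof of~\autoref{theo:bit_operations}; here we are only re-reading those estimates in a coarser model.

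First I would handle the preprocessing, lines 1--3. \autoref{lem:maximal_subsystem} yields a rank-$d$ subset in $\OTilde(nd^{\omega-1})$ arithmetic operations. For line 3 I would compute $B^{-1}$ in $O(d^\omega)$ operations and then form $X = B^{-1}C$ in $O(nd^{\omega-1})$ operations; since $n \geq d$ this sums to $\OTilde(nd^{\omega-1})$.

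Next I would bound the nested loops in lines 4--11. The outer loop runs $d$ times and the inner loop $n-d$ times, so there are exactly $(n-d)d$ inner iterations, and each performs $O(1)$ vector operations of length $d$ in lines 9--10 — contributing $O((n-d)d^2)$ in total. Each inner iteration also invokes an extended Euclidean gcd in line 8. By~\autoref{ob:bounded_representation_of_solution} the numerators and denominators involved in the $i$-th outer iteration are bounded by $\fract(B^{(i)})$, so Schönhage's algorithm runs in $\OTilde(\log \fract(B^{(i)}))$ arithmetic operations per gcd. Summing the inner loop gives $\OTilde((n-d)\log \fract(B^{(i)}))$ per outer iteration, and~\autoref{cor:fractionality_sum} telescopes $\sum_{i=1}^d \log \fract(B^{(i)}) = O(\log \det B) = O(d \log \norm{A})$. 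Thus the gcd work (including the row-wise pivot selection in lines 5--6, which is of the same shape) contributes $\OTilde((n-d)d\log\norm{A})$.

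Finally line 12 is a single $d \times d$ matrix multiplication, costing $O(d^\omega)$ arithmetic operations. Adding everything yields
\[
\OTilde(nd^{\omega-1}) \;+\; O((n-d)d^2) \;+\; \OTilde((n-d)d\log\norm{A}) \;+\; O(d^\omega) \;=\; \OTilde(nd^{\omega-1} + (n-d)d^2\log\norm{A}),
\]
since $nd^{\omega-1}$ absorbs $d^\omega$ (as $n\geq d$) and $(n-d)d^2\log\norm{A}$ absorbs both $(n-d)d\log\norm{A}$ and $(n-d)d^2$. The only delicate step is charging the gcd subroutine honestly, because naively the fractional entries could have denominators as large as $\det(B)$; the main obstacle is therefore reusing the telescoping identity from~\autoref{cor:fractionality_sum}, which ensures the $\log$ factors add, rather than multiply, across iterations. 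Once that is in hand, the claimed arithmetic bound follows immediately.
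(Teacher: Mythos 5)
Your proof follows essentially the same line-by-line accounting as the paper's own sketch, using the same telescoping identity from \autoref{cor:fractionality_sum}, and the final bound is indeed correct. There is, however, a real accounting error in the middle step. You claim lines 5--6 are ``of the same shape'' as line 8, so that the whole gcd work contributes $\OTilde((n-d)d\log\norm{A})$. That is not what line 5 does. Line 8 performs one gcd per inner iteration, i.e.\ $n-d$ gcds per outer iteration. Line 5, by contrast, must find the row of maximum fractionality: for each of the $d$ remaining rows of $X$ it computes the lcm of that row's $n-d$ denominators, which is $\Theta((n-d)d)$ gcd/lcm calls per outer iteration. Each such call costs $\OTilde(\log\fract(B^{(i)}))$ arithmetic operations, so line 5 alone contributes $\OTilde((n-d)d\log\fract(B^{(i)}))$ per outer iteration, and the telescoping sum gives $\OTilde((n-d)d^2\log\norm{A})$ in total --- a full factor $d$ more than you charged. (A similar $\OTilde((n-d)d^2\log\norm{A})$ term arises from reducing the $(n-d)d$ entries of $X[0]$ to lowest terms after line 3, since each gcd there involves numbers of bit-length $O(d\log\norm{A})$.)

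Your conclusion survives only because of the generous rounding at the very end: you replace $(n-d)d^2 + (n-d)d\log\norm{A}$ by $(n-d)d^2\log\norm{A}$, which happens to recover exactly the term you undercounted. The stated corollary is therefore established, but the argument misidentifies the bottleneck: the $d^2\log\norm{A}$ factor is not slack introduced by a coarse upper bound, it is the genuine cost of pivot selection (and of the initial normalization of $X$), exactly as in the proof of \autoref{theo:bit_operations}. If line 5 really were as cheap as you claim, the corollary could be tightened to $\OTilde(nd^{\omega-1} + (n-d)d(d + \log\norm{A}))$ --- which would be a stronger statement than the paper makes and should have signalled that something was off.
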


\subsection{Matrix Multiplication}\label{sec:matrix_mult}

In the previous section we analyzed the bit complexity of \autoref{alg:fasteuclidean} including the matrix multiplication $S=BY$ in l.12. In a naive approach entries of $Y$ might be as large as $\det B \leq (d\norm{B})^d$ and thus the running time would increase by a factor of $d$. In this section we use a technique similar to~\cite[Lemma 2]{DBLP:conf/issac/BirmpilisLS19} that analyses the complexity of matrix multiplication based on the dimension and the size of coefficients in each column of the second matrix. Compared to~\cite[Lemma 2]{DBLP:conf/issac/BirmpilisLS19}, there are two main differences in our analysis. First, we consider the magnitude of each column individually instead of the magnitude of the entire matrix. Second, we allow rectangular matrix multiplication in order to improve the running time in the case that there are many columns or very large numbers in the second matrix. 

The following lemma both improves the running time of the matrix multiplication in l.12 significantly compared to the naive approach as well as it is a key component in the following section for analyzing the complexity of solving a linear system $BX=C$.

\begin{lemma}\label{lem:matrix_mult}
    Consider two matrices $M\in \zz^{a \times a}$ and $N\in \zz^{a \times b}$ and let $\nu := \sum_{i=1}^{b}\log(\norm{N_i} +1) / \log \norm{M} $. Consider any $k\geq 1$. Then the matrix multiplication $M\cdot N$ can be performed in
    \[\tilde{O}(\max\{\nu, a^k\} \cdot a^{\omega(k)-k}\log \norm{M} )\]
    bit operations.
\end{lemma}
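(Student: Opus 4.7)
The strategy is to reduce the multiplication to rectangular integer matrix multiplication via a base-$\norm{M}$ digit decomposition of $N$, batch the resulting digit-columns into blocks of width $a^k$, and recombine with shifts and additions.

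First, I would set $\mu := 2\lceil \norm{M}\rceil$ and write each column $N_i$ in signed base-$\mu$ form as $N_i = \sum_{j=0}^{c_i-1} N_i^{(j)}\,\mu^j$, where $N_i^{(j)} \in \zz^a$ satisfies $\|N_i^{(j)}\|_\infty \leq \mu$ and $c_i = \lceil \log(\norm{N_i}+1)/\log\mu \rceil$. Concatenating all non-zero digit-vectors produces an integer matrix $\hat N \in \zz^{a \times s}$ whose entries are $O(\norm{M})$ and whose column count is $s \leq \nu + b'$, where $b'$ is the number of non-zero columns of $N$. Zero columns of $N$ contribute zero to $MN$ and may be discarded; in the applications of this lemma in the paper (and, more generally, whenever $b' \leq \max\{\nu,a^k\}$) this yields $s = O(\max\{\nu,a^k\})$. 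Since $M N = \sum_{j} (M N^{(j)})\,\mu^j$, it then suffices to compute $M \hat N$ and reassemble.

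Next I would partition $\hat N$ column-wise into $\lceil s/a^k \rceil$ consecutive blocks of width $a^k$ and compute $M$ times each block by rectangular matrix multiplication. Multiplying an $a \times a$ matrix by an $a \times a^k$ matrix uses $O(a^{\omega(k)})$ arithmetic operations over $\zz$ by definition of $\omega(k)$. Each entry of such a product is a sum of $a$ products of integers of absolute value at most $\mu \leq O(\norm{M})$, so it has bit-length $\tilde O(\log\norm{M})$; using fast integer arithmetic, each block multiplication costs $\tilde O(a^{\omega(k)}\log\norm{M})$ bit operations. Summed over the $O(\max\{\nu,a^k\}/a^k)$ blocks, this contributes
\begin{equation*}
    \tilde O\!\left(\max\{\nu,a^k\} \cdot a^{\omega(k)-k} \cdot \log\norm{M}\right)
\end{equation*}
bit operations.

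Finally, I would reassemble the answer column by column: for each $i$, $(MN)_i = \sum_{j=0}^{c_i-1} (MN_i^{(j)}) \mu^j$, which is $c_i$ shifted additions on length-$a$ vectors whose entries grow to bit-length $\tilde O(c_i \log\norm{M})$. The cost of reassembling column $i$ is $\tilde O(c_i \cdot a \cdot \log\norm{M})$, so in total the reassembly costs $\tilde O(s \cdot a \cdot \log\norm{M}) = \tilde O(\max\{\nu,a^k\} \cdot a \cdot \log\norm{M})$, which is absorbed by the block-multiplication cost because $\omega(k)-k \geq 1$ for every $k \geq 1$.

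The main technical obstacles are essentially bookkeeping: verifying that the $+1$ in the ceiling of $c_i$ (and the discarded zero columns) only inflates $s$ to $\nu + b'$ rather than something larger, and checking that the bit-growth during reassembly does not dominate. Once the digit decomposition and block structure are set up, the bound then follows directly from the arithmetic cost $O(a^{\omega(k)})$ of $a \times a$ by $a \times a^k$ multiplication combined with the $\tilde O(\log\norm{M})$ bit-cost per arithmetic operation.
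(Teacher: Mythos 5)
Your proof is correct and follows essentially the same route as the paper's: decompose $N$ into digit-columns in a base of size roughly $\norm{M}$, batch the nonzero digit-columns into blocks of width $a^k$, apply rectangular matrix multiplication block by block, and reassemble via shifted additions, noting the reassembly cost is dominated because $\omega(k)-k\geq 1$. The only small variations are that you use a signed digit representation, which lets you avoid the paper's split of $M$ and $N$ into nonnegative parts $M^{(\pm)},N^{(\pm)}$, and that you handle the additive $+b'$ coming from the ceilings in the per-column digit counts explicitly (the paper asserts $O(\nu)$ nonzero digit-columns without comment); your observation that the stated bound implicitly needs $b'=O(\max\{\nu,a^k\})$, which holds in the paper's applications, is a fair and slightly sharper reading of the statement.
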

\begin{proof}
    Consider any $k\geq 1$. Without loss of generality, we will first consider the case that $M$ and $N$ both consist of non-negative entries only. 
    
    Let $X \in \zz_{>0}$ be the smallest power of $2$ such that $X > \norm{M} $. Let $N$ have the $X$-adic expansion $N = \sum_{i=0}^{p-1}X^iN^{(i)}$ for some $p\in \zz$ and set 
    \begin{align*}
        N' = [\ N^{(0)}\ |\ N^{(1)}\ |\ \ldots\ |\ N^{(p-1)}\ ] \in \zz^{a \times bp}.
    \end{align*}
    Although at first thought the matrix has $bp$ columns, the size of entries for each column $i$ of $N$ decides the size of the $X$-adic expansion of that column $i$ individually. Thus, there are at most $\sum_{i=1}^{b}\log_{X}(\norm{N_i} +1) = O(\nu)$ many non-zero columns. Compute the matrix product
    \begin{align*}
        MN' = [\ MN^{(0)}\ |\ MN^{(1)}\ |\ \ldots\ |\ MN^{(p-1)}\ ]
    \end{align*}
     using rectangular matrix multiplication for dimensions $a \times a$ and $a\times a^k$. We add zero columns such that the number of columns in $N'$ is a multiple of $a^k$. Then compute $O(\nu / a^k)$ rectangular matrix multiplications in order to obtain $MN'$. This can be done in target time as each of the $O(\nu / a^k)$ matrix multiplications costs $\tilde{O}(a^{\omega(k)}\log X)$ bit operations, which results in time $\tilde{O}(\max\{\nu, a^k\} \cdot a^{\omega(k)-k}\log \norm{M} )$.

    Now compute $MN = \sum_{i=0}^{p-1}X^iMN^{(i)}$. The sum can be computed in $\tilde{O}(\nu a\norm{M} )$ bit operations since augmenting $\sum_{i=0}^{\ell}X^iMN^{(i)}$ to $\sum_{i=0}^{\ell+1}X^iMN^{(i)}$ for some $\ell < p-1$ only requires changes in the leading $d$ bits for some $d \in O(\log(a X)) = O(\log(a\norm{M} ))$ and in total over the entire sum only $\nu$ columns are added (as others will be zero).
    
    Let us now consider the case of matrices containing also negative entries. Define $M^{(+)}$ as the matrix $M$ but all negative entries of $M$ are replaced by $0$ and let $M^{(-)} := M^{(+)} - M$. Define $N^{(+)}$ and $N^{(-)}$ similarly with respect to $N$. Thus, we get that $M = M^{(+)} - M^{(-)}$ and  $N = N^{(+)} - N^{(-)}$. Furthermore, all entries in $M^{(+)}$, $M^{(-)}$, $N^{(+)}$, and $N^{(-)}$ are non-negative. Using the procedure above we compute $M^{(+)}N^{(+)}$, $M^{(+)}N^{(-)}$, $M^{(-)}N^{(+)}$, and $M^{(-)}N^{(-)}$ in target time. We then get the result by computing 
    \begin{align*}
        MN = (M^{(+)} - M^{(-)})(N^{(+)} - N^{(-)}) = M^{(+)}N^{(+)} - M^{(+)}N^{(-)}-M^{(-)}N^{(+)} + M^{(-)}N^{(-)}.
    \end{align*}
\end{proof}

As rectangular matrix multiplication requires the same running time for dimensions $a\times a$ and $a\times a^k$, $a^k\times a$ and $a \times a$, and $a\times a^k$ and $a^k \times a$, small tweaks of the algorithm above might also solve similar special cases. However, as we only require matrix multiplications of this form, we will not discuss this further. 

Using $\nu = \sum_{i=1}^{b}\log(\norm{N_i} +1) / \log \norm{M}  \in  O(b\cdot \log\norm{N}  / \log \norm{M})$ we get the following simpler statement, which is already sufficient in many cases. The precise version above improves the running time only if entry sizes of columns from the second matrix differ exponentially or the analysis of the magnitude spreads the complexity over all columns. In our case this applies to l.12 of \autoref{alg:fasteuclidean}, where we compute the solution matrix $S=BY$.

\begin{corollary} \label{cor:matrix_mult}
    Consider two matrices $M\in \zz^{a \times a}$ and $N\in \zz^{a \times b}$ and any $k\geq 1$. Then the matrix multiplication $M\cdot N$ can be performed in 
    \begin{align*}
        \tilde{O}(\max\{b\cdot \log\norm{N}/\log\norm{M}, a^k\}a^{\omega(k)-k}\log\norm{M}).
    \end{align*}
\end{corollary}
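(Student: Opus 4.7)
The plan is to derive this corollary as a direct consequence of \autoref{lem:matrix_mult} by replacing the per-column quantity $\nu = \sum_{i=1}^{b}\log(\norm{N_i}+1)/\log\norm{M}$ with a uniform upper bound expressed in terms of $\norm{N}$.

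First I would observe that, since $\norm{N}$ denotes the maximum absolute entry of $N$ and each $\norm{N_i}$ is the maximum absolute entry of one of its columns, we trivially have $\norm{N_i} \leq \norm{N}$ for every $i \in \{1,\ldots,b\}$. Consequently
\begin{align*}
    \nu \;=\; \sum_{i=1}^{b} \frac{\log(\norm{N_i}+1)}{\log\norm{M}} \;\leq\; b \cdot \frac{\log(\norm{N}+1)}{\log\norm{M}} \;\in\; O\!\left(\frac{b \log\norm{N}}{\log\norm{M}}\right).
\end{align*}
Plugging this bound into the running time estimate of \autoref{lem:matrix_mult} and using monotonicity of $\max\{\cdot,a^k\}$ immediately gives the stated complexity.

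There is essentially no obstacle beyond bookkeeping: the hard work (the $X$-adic splitting, the choice of rectangular matrix multiplication with exponent $\omega(k)$, and the handling of signs) has already been carried out in the proof of \autoref{lem:matrix_mult}. The only subtlety worth mentioning is that the bound is loose precisely when the column norms $\norm{N_i}$ differ substantially from $\norm{N}$; this is exactly the situation encountered in l.12 of \autoref{alg:fasteuclidean}, which is why the sharper per-column form of the lemma is retained alongside this corollary.
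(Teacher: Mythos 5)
Your proof is correct and matches the paper's own derivation: the corollary is obtained precisely by bounding each $\norm{N_i}$ by $\norm{N}$, so $\nu \in O(b\log\norm{N}/\log\norm{M})$, and then substituting into \autoref{lem:matrix_mult} using monotonicity of the $\max$.
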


\subsection{Linear System Solving} \label{sec:lin_sys_solving}
The main complexity of~\autoref{alg:fasteuclidean} stems from solving the linear system $BX=C$ in l.3. In this section we combine the algorithm \texttt{solve} from~\cite{DBLP:conf/issac/BirmpilisLS19} with our \autoref{lem:matrix_mult} for matrix multiplication in the previous lemma. Our analysis considers linear systems with a matrix right-hand side instead of a vector right-hand side.

\begin{lemma}\label{lem:lin_sys_solving}
    Given a nonsingular matrix $M \in \zz^{a \times a}$ and a right hand-side matrix $R \in \zz^{a \times b}$, where $\log\norm{R} \in O(a\log\norm{M})$.
    Consider any $k > 0$. Then 
    \begin{align*}
        LS(a,b,\norm{M}, \norm{R}) \in \tilde{O}(\max\{b, a^k\}a^{\omega(k+1)-k}\log\norm{M}).
    \end{align*}
\end{lemma}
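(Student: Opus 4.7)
The plan is to invoke the \texttt{solve} procedure of Birmpilis--Labahn--Storjohann~\cite{DBLP:conf/issac/BirmpilisLS19}, which solves a single right-hand side in $\tilde{O}(a^\omega \log\norm{M})$ bit operations, and to extend it to a matrix right-hand side $R \in \zz^{a \times b}$ by charging every internal matrix-matrix product against \autoref{lem:matrix_mult} with the rectangular parameter $k+1$.

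I would begin by recalling the two-phase structure of BLS. Phase 1 is an $M$-only preprocessing (high-order $p$-adic Newton lifting of $M^{-1}$ to sufficient precision) that is independent of the right-hand side and costs $\tilde{O}(a^\omega \log\norm{M})$. This is already absorbed by the target bound, since by monotonicity of rectangular matrix-multiplication exponents $a^\omega \leq a^{\omega(k+1)-k}\cdot a^k$, i.e.\ the preprocessing is at most $\tilde{O}(a^k \cdot a^{\omega(k+1)-k} \log\norm{M})$. Phase 2 applies the lifted inverse to the right-hand side, carries out the $p$-adic lifting of the solution, and reconstructs the rational solution. In BLS the doubling/high-order scheme collapses what would naively be $\Omega(a)$ Dixon iterations into $\tilde O(1)$ outer matrix products of the form $U \cdot r'$ with $U \in \zz^{a \times a}$, $\norm{U} \leq \norm{M}^{O(1)}$, and $\norm{r'} \leq \norm{M}^{O(a)}$.

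Next I would argue that passing to a matrix right-hand side with $\log\norm{R} \in O(a\log\norm{M})$ leaves the control flow and precision requirements of BLS unchanged: the required $p$-adic precision for reconstruction is still $\tilde{O}(a\log\norm{M})$ bits per entry, so each matrix-vector product $U \cdot r'$ becomes a matrix-matrix product $U \cdot R'$ with $U \in \zz^{a\times a}$, $\norm{U}\leq \norm{M}^{O(1)}$, and $R' \in \zz^{a\times b}$ whose per-column bit length is $\tilde{O}(a\log\norm{M})$. Invoking \autoref{lem:matrix_mult} on this product with parameter $k+1$ and using that $\nu = \sum_j \log(\norm{R'_j}+1)/\log\norm{U} \in \tilde{O}(ab)$, each such product costs
\[\tilde{O}\bigl(\max\{ab,\,a^{k+1}\}\cdot a^{\omega(k+1)-(k+1)}\log\norm{M}\bigr)\;=\;\tilde{O}\bigl(\max\{b,\,a^k\}\cdot a^{\omega(k+1)-k}\log\norm{M}\bigr),\]
which matches the target. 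Summing over the $\tilde O(1)$ outer products of Phase 2 and combining with Phase 1 gives the claimed bound.

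The principal obstacle is cleanly isolating the matrix-multiplication component of BLS from the lower-order bookkeeping (modular reductions, Newton iteration control, rational reconstruction) and verifying that replacing matrix-vector products by matrix-matrix products of width $b$ does not alter either the algorithm's correctness or the bit magnitudes used above. Once this is in place, the proof is a direct substitution of \autoref{lem:matrix_mult} with rectangular parameter $k+1$, where the ``$+1$'' precisely absorbs the factor of $a$ contributed by the $\tilde O(a\log\norm{M})$-bit per-column size of the intermediate right-hand sides.
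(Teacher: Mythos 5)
Your proposal is correct and follows essentially the same route as the paper: both proofs reuse the BLS \texttt{solve} procedure verbatim, note that the $M$-only preprocessing (\texttt{Massager}/\texttt{ApplyMassager}) is RHS-independent and costs $\tilde{O}(a^\omega\log\norm{M})$, and then charge the matrix--matrix products arising from the width-$b$ right-hand side against \autoref{lem:matrix_mult}/\autoref{cor:matrix_mult} with rectangular parameter $k+1$, where the ``$+1$'' absorbs the $\tilde{O}(a\log\norm{M})$-bit column magnitudes exactly as you compute. The ``obstacle'' you flag is what the paper actually does: it identifies the two places where width-$b$ products occur --- inside \texttt{SpecialSolve}, and in the step-4 product $QZ$, which is handled via the $\chi'$-adic expansion of $Q$ into $\leq 2a$ columns --- and verifies that all other operations (diagonal scaling by $2^eS^{-1}$, applying the permutation $P$, $\mathrm{Rem}$) stay within the target bound.
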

\begin{proof}

Consider any $k>0$. We analyze algorithm \texttt{solve} from Birmpilis et.~al~\cite[Figure~8]{DBLP:conf/issac/BirmpilisLS19}, see \autoref{alg:solve},  for a right-hand side matrix instead of a vector. 
\begin{algorithm}[h!]
    \caption{Solve (Linear System)}
    \label{alg:solve}
  \begin{algorithmic}[1]
    \Statex{
    \textsc{Input: } Nonsingular $M\in \zz^{a\times a}$ and $R\in \zz^{a\times b}$.}
    \Statex{
    \textsc{Output: } $X\in \zz^{a \times b}$ and $e \in \zz_{\geq 0}$ such that $e$ is minimal such that all denominators of the entries in $2^e M^{-1}$ are relatively prime to $2$, and $x = \mathrm{Rem}(2^e M^{-1}R, 2^d )$ where $d$ is as defined in step $3$.}
    \Statex{
    \textsc{Note: } $2^e M^{-1} R = \mathrm{RatRecon}(x,2^d,N,D)$.}
    \State{$(P,S,Q)) := \mathrm{Massager}(M,a)$ }
    \Statex{$e := \log_2S_{aa}$}
    \State{$U := \mathrm{ApplyMassager}(M,a,P,S,Q,a)$}
    \State{$N := \floor{a^{a/2}\norm{M}^{a-1}\norm{R}}$}
    \Statex{$D := \floor{a^{a/2}\norm{M}^{a} / 2^e} $}
    \Statex{$d := \lceil \log(2ND)\rceil $}
    \Statex{$Y := \mathrm{SpecialSolve}(U,R,d,a,b)$}
    \State{$X := \mathrm{Rem}(PQ(2^eS^{-1})Y, 2^d)$}
    \State \Return{$X,e$}
  \end{algorithmic}
\end{algorithm}

Correctness follows directly from their proof. 
As most of their analysis directly caries over and changes  appear in matrix dimensions only, our analysis will focus on the differences required for our running time. Throughout their paper they phrase the results under the condition of a \emph{dimension $\times$ precision invariant} that restricts part of the input to be near-linear in $a\log\norm{M}$. For a right-hand side matrix this invariant might no longer be satisfied. So in other words, we instead parameterize over the size of this quantity by a polynomial $k$ using rectangular matrix multiplication $\omega(k+1)$ from \autoref{lem:matrix_mult}. 

Compared to the analysis from~\cite{DBLP:conf/issac/BirmpilisLS19}, we need to take a closer look into two calculations from the algorithm. In the algorithm steps $1$ and $2$ do not depend on the right-hand side and thus have the same running time as before, which is $\tilde{O}(a^{\omega}\log\norm{M})$.  From step $3$ computing $N$, $D$, and $d$ is also in target time. 

Thus, the first thing we need to analyze is $Y := \mathrm{SpecialSolve}(U,R,d,a,b)$ in step 3. 
For a swift analysis, let us analyze some magnitudes from the algorithm. We get that 
\[d \in O(\log (a^a\norm{M}^{2a}\norm{R})) = \tilde{O}(a\log\norm{M}).\] 
Numbers involved are bounded by $\chi^{2^{\ell+1}+1}$ with $\chi \in O(a^2\norm{M}))$ and $\ell \leq \log(d+1)$. Thus, we get that numbers involved are at most $ O(a^2\norm{M}))^{2^{\log( d +1)+1}+1} = 2^{\tilde{O}(a\log\norm{M})}$.

Their analysis of \texttt{SpecialSolve} requires the dimension $\times$ precision invariant $b\cdot d \in O(a\log(a \norm{M}))$, which is not necessarily the case here. 
However, the running time is dominated by $\ell \in \tilde{O}(\log(a\log\norm{M}))$ matrix multiplications of an 
$a\times a$ matrix  with coefficients of magnitude $O(a^2\norm{M})$ and an $a\times b$ matrix with coefficients of magnitude $2^{\tilde{O}(a\log\norm{M})}$. 

We now want to perform the matrix multiplication using \autoref{cor:matrix_mult}. We fill the second matrix with columns of zeroes such that it is a multiple of $a^{k}$, say $b'$ columns. Using \autoref{cor:matrix_mult} with $k' = k+1$, we perform each matrix multiplication in time 
\begin{align*}
    &\tilde{O}(\max\{b' \cdot a\log\norm{M} / \log\norm{M}, a^{k'}\} a^{\omega(k')-k'}\cdot a \log \norm{M}) \\&= \tilde{O}(\max\{b,a^k\}a^{\omega(k+1)-k}\log\norm{M}).
\end{align*} 

The second calculation we need to analyze is $\rem(PQ(2^eS^{-1})Y, 2^d)$ from step 4. The first part $Z := (2^eS^{-1})Y$ involves a diagonal matrix $S^{-1}$ and can be computed in time. For the multiplication $QZ$, we again roughly follow the steps from their paper. By their Lemma 17, the $\chi'$-adic expansion of the columns of $Q$ consists of $a' \leq 2a$ columns for $\chi'$ the smallest power of $2$ such that $\chi'\geq \sqrt{a}\norm{M}$. Let $Q' = \left(Q_0 \ldots Q_{p-1} \right)$ be the $\chi'$-adic expansion of $Q$, where $Q_i\in\zz^{a\times k_i}$ and $\sum_{i <p}k_i = a' \leq 2a$. Let $ Z = \left( Z_0 \ldots Z_{p-1} \right)$ be the $\chi'$-adic expansions of $Z$ and let $Z_i^{(k_i)}$ be the submatrix of the last $k_i$ rows. The matrix multiplication can be restored from the product
\begin{align}
    \begin{pmatrix}Q_0 \ldots Q_{p-1}\end{pmatrix}
    \begin{pmatrix}
    Z_0^{(k_0)} & Z_1^{(k_0)} & \ldots & Z_{p-1}^{(k_0)} \\
    & Z_0^{(k_1)} & \ldots & Z_{p-2}^{(k_1)} \\
    && \ddots & \vdots \\
    &&& Z_0^{(k_{p-1})}
    \end{pmatrix}.
\end{align}
The dimensions are $a \times a'$ and $a' \times bd$ since coefficients of $Z$ are bounded by $2^d$. Hence, using \autoref{cor:matrix_mult} the matrix multiplication $QZ$ can also be computed in time $\tilde{O}(\max\{b,a^k\}a^{\omega(k+1)-k}\log\norm{M})$, similar to the matrix multiplications above. Note that $a' \neq a$ is not a problem as $a' \leq 2a$ and for example filling up rows of zeroes and calculating with an $a'\times a'$ matrix lets us apply the corollary directly. 

Finally, we apply the permutation matrix $P$ to $QZ$ and obtain the result $X$ in target time.  
\end{proof}

Inserting \autoref{lem:lin_sys_solving} for linear system solving into \autoref{theo:bit_operations}, we arrive at our final result for the bit complexity of \autoref{alg:fasteuclidean}.
\begin{corollary}
    The number of bit operations of \autoref{alg:fasteuclidean} is bounded by
    \[\tilde{O}(\max\{n-d, d^k\}d^{\omega(k+1)-k}\log\norm{A})\quad \text{for any }k\geq 0.\]
\end{corollary}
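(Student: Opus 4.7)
The plan is to obtain the corollary by substituting the bound of Lemma~\ref{lem:lin_sys_solving} directly into the estimate of Theorem~\ref{theo:bit_operations}, after checking that the hypotheses of the lemma are met for the linear system solved in line~3 of \autoref{alg:fasteuclidean} and that the additive term from the theorem is absorbed. The linear system $BX=C$ in line~3 has $B\in\zz^{d\times d}$ with $\norm{B}\leq\norm{A}$ and $C\in\zz^{d\times(n-d)}$ with $\norm{C}\leq\norm{A}$, since both matrices are built from columns of $A$. Thus $\log\norm{C}=O(\log\norm{A})\subseteq O(d\log\norm{A})$, and the hypothesis $\log\norm{R}\in O(a\log\norm{M})$ of Lemma~\ref{lem:lin_sys_solving} is trivially satisfied.

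Applying Lemma~\ref{lem:lin_sys_solving} with $a=d$, $b=n-d$, $M=B$, $R=C$ for any $k\geq 1$ yields
\[
LS(d,n-d,\norm{A}) \in \tilde{O}\bigl(\max\{n-d,d^k\}\,d^{\omega(k+1)-k}\log\norm{A}\bigr).
\]
The boundary case $k=0$ must be argued separately since the lemma is stated only for $k>0$: here one invokes the algorithm of~\cite{DBLP:conf/issac/BirmpilisLS19} independently on each of the $n-d$ columns of $C$, each call costing $\tilde{O}(d^\omega\log\norm{A})$ bit operations, and obtains $\tilde{O}((n-d)d^\omega\log\norm{A})$, which matches the claim when $k=0$ since $\omega(1)=\omega$.

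It remains to argue that the additive term $\tilde{O}((n-d)d^2\log\norm{A})$ from Theorem~\ref{theo:bit_operations} is swallowed by the first term. The key observation is the trivial lower bound $\omega(k+1)\geq k+2$ for every $k\geq 0$, which holds because even writing down the output of an $n\times n$ by $n\times n^{k+1}$ matrix product requires $n^{k+2}$ entries. Consequently $d^{\omega(k+1)-k}\geq d^2$, so
\[
\max\{n-d,d^k\}\,d^{\omega(k+1)-k}\ \geq\ (n-d)\,d^2,
\]
and the additive term is absorbed into the $\tilde{O}$ constant of the first term. Combining the two bounds gives the claimed running time.

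The main obstacle, such as it is, is purely bookkeeping: handling the $k=0$ edge case separately from Lemma~\ref{lem:lin_sys_solving}, and performing the arithmetic check $\omega(k+1)\geq k+2$ that enables the additive term to be absorbed. Once Lemma~\ref{lem:lin_sys_solving} and Theorem~\ref{theo:bit_operations} are in hand, no further algorithmic or structural argument is required.
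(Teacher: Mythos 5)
Your proposal is correct and follows the same route as the paper, which simply states that the corollary follows by inserting Lemma~\ref{lem:lin_sys_solving} into Theorem~\ref{theo:bit_operations}. You go somewhat further than the paper by explicitly handling the $k=0$ boundary case (where the lemma does not formally apply) via per-column calls to the solver of~\cite{DBLP:conf/issac/BirmpilisLS19}, and by verifying that the additive $\tilde{O}((n-d)d^2\log\norm{A})$ term is absorbed using the elementary bound $\omega(k+1)\geq k+2$; both of these details are sound and fill in bookkeeping the paper leaves implicit.
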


Note that for $n-d\in o(d)$ we choose $k = 0$ and get $\tilde{O}((n-d)d^{\omega}\log\norm{A})$.  In the case of $n-d \in O(1)$ this matches the running time of the algorithm  from Li and Storjohann~\cite{DBLP:conf/issac/LiS22}, which is restricted to this special case. We get the smallest improvement over the currently fastest general algorithm by Storjohann and Labahn~\cite{DBLP:conf/issac/StorjohannL96} in the regime of $n-d \in \Theta(d)$. In that regime we choose $k=1$ and get a running time of $\tilde{O}(d^{\omega(2)}\log\norm{A})$. The improvement for current values of $\omega$ and $\omega(2)$ is a factor of $d^{\omega - \omega(2) + 1} \approx d^{0.121356}$~\cite{williams2024new_matrixmultiplication}. Considering larger amounts of additional vectors, also our improvement over the algorithm by Storjohann and Labahn~~\cite{DBLP:conf/issac/StorjohannL96} gets slightly better. Say for example $n-d \in \Theta(d^4)$, then we choose $k=4$ and get a running time of $\tilde{O}((n-d)d^{\omega(5)-4}\log\norm{A})$. This is an improvement of a factor of $d^{\omega - \omega(5) + 4} \approx d^{0.215211}$ for current values of $\omega$ and $\omega(5)$~\cite{williams2024new_matrixmultiplication}. 

\subsection{Lower Rank Lattices}\label{sec:lower_dimensions}

Our algorithm~\ref{alg:fasteuclidean} can be easily adapted to also work for lattice inputs of lower rank. Instead of finding $d$ linearly independent vectors, we find a full rank submatrix $\hat{B}$. Then we proceed as before but restricted to the rows provided by $\hat{B}$. As $Y$ is the representation of the resulting basis as a combination of vectors in $\hat{B}$, a multiplication with $B$ restores the result in all $d$ dimensions. The adjusted algorithm is as follows. 

\begin{algorithm}
  \caption{Fast Generalized Euclidean Algorithm (for lower rank lattices)
    \label{alg:fasteuclideanforlowerdimensions}}
  \begin{algorithmic}[1]
    \Statex{
    \textsc{Input: } A matrix $A=\left(A_1,\ldots, A_n \right) \in \zz^{\dd\times n}$ of rank $\hat{d} \leq d$.}
    \State{\textbf{find} independent vectors $B := \left(B_1, \ldots, B_{\hat{d}}\right)$ with $B_i\in \{A_1,\ldots, A_n\}$}
    \State{\textbf{find} independent rows $k_1, \ldots, k_{\hat{d}}$ of $B$ and let $\pi$ be a projection on these (row) indices}
    \State{\textbf{let} $C$ be a matrix with columns $\{A_1,\ldots, A_n\}\setminus \{B_1,\ldots, B_{\hat{d}}\}$}
    \State{\textbf{let} $\hat{C} = \pi(C)$ and $\hat{B} = \pi(B)$}
    \State{\textbf{solve} $\hat{B} X = \hat{C}$}
    \For{$i=1$ to $\hat{d}$}
        \State{\textbf{choose} index $\ell$ with maximum fractionality in matrix $X$}
        \State{\textbf{determine} the translate $\notd$ of $\hat{B}_\ell$ and the translate $t_i$ of vectors $\hat{C} = (\hat{C}_1, \ldots , \hat{C}_{n-d})$}
       \Statex{\hspace{2.65cm}with respect to the subspace $\hat{B} \setminus \hat{B}_\ell$}
        \For{$j=1$ to $n-\hat{d}$} 
        \State{\textbf{compute} $g_j := gcd(t, t_1, \ldots , t_j)$ and factors $\alpha_j, \beta_j$ such that $g_j = \alpha_j g_{j-1} + \beta_j t_j$} 
       \State{\textbf{set} $Z_j = \alpha_j Z_{j-1} + \beta_j X_{\hat{C}_j}$ with $Z_0 = \pi(e_{\ell})$}
       \State{\textbf{set} $X_{\hat{C}_j} = g_{j-1}/g_j \cdot X_{\hat{C}_{j}} - t_{j}/g_j \cdot Z_{j-1}$}
       \EndFor
       \State{\textbf{set} $Y_i = Z_{n-\hat{d}}$}
    \EndFor
    \State \Return{Basis $S = B Y$, with matrix $Y$ containing $Y_i$ as the $i$-th column}
  \end{algorithmic}
\end{algorithm}

\begin{theorem}
    \autoref{alg:fasteuclideanforlowerdimensions} computes a lattice basis of $A\in\zz^{d \times n}$ with $\hat{d} := rank(A)$ using bit complexity of 
    \begin{align*}
         \Tilde{O}(\max\{n,d\}d^{\omega-1}\log\norm{A}) + LS(\hat{\dd},n-\hat{\dd},\norm{A})  + \Tilde{O}((n-\hat{d})\hat{d}^2 \cdot \log(\norm{A})).
    \end{align*}
\end{theorem}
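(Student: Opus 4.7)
The plan is to reduce the low-rank case to the full-rank Algorithm~\ref{alg:fasteuclidean} via the projection $\pi$ onto the $\hat d$ rows where $B$ is non-singular. The key observation for correctness will be that since $\mathrm{rank}(A) = \hat d = \mathrm{rank}(B)$, every column of $A$ lies in $\mathrm{col}(B)$, and $\pi$ restricted to $\mathrm{col}(B)$ is a linear bijection onto $\mathbb{R}^{\hat d}$ (because $\hat B = \pi(B)$ is a non-singular $\hat d \times \hat d$ matrix). Hence $\pi$ restricted to $\lattice(A)$ is a lattice isomorphism between $\lattice(A)$ and $\pi(\lattice(A)) = \lattice(\hat B \cup \hat C)$, with inverse given by multiplying by $B \hat B^{-1}$ on the image. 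By \autoref{thm:fast_euclid_correct} applied to $(\hat B, \hat C)$, the fractional matrix $Y \in \qq^{\hat d \times \hat d}$ constructed in lines 6--13 satisfies $\lattice(\hat B Y) = \lattice(\hat B \cup \hat C)$. Applying $\pi^{-1}$ on the column span of $B$ sends $\hat B Y$ to $B Y$, so $S = B Y$ is a basis of $\lattice(B \cup C) = \lattice(A)$.

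For the bit complexity, the work decomposes into four parts. Finding a maximum-rank set of columns $B$ of $A$ in l.1 takes $\tilde O(n d^{\omega-1} \log \norm{A})$ bit operations by (a straightforward extension of) \autoref{lem:maximal_subsystem}; finding independent rows of $B$ in l.2 amounts to running the same routine on $B^T \in \zz^{\hat d \times d}$, costing $\tilde O(d \hat d^{\omega - 1} \log \norm{A}) \le \tilde O(d\cdot d^{\omega-1}\log\norm A)$. Setting up $\hat C$ and $\hat B$ in l.3--l.4 is for free. Lines 5--14 are identical to \autoref{alg:fasteuclidean} executed on the $\hat d \times \hat d$ non-singular matrix $\hat B$ with right-hand side $\hat C \in \zz^{\hat d \times (n-\hat d)}$, where $\norm{\hat B}, \norm{\hat C} \le \norm{A}$. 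By \autoref{theo:bit_operations} this costs
\[
LS(\hat d, n-\hat d, \norm A) + \tilde O((n-\hat d)\hat d^{2} \log\norm A).
\]

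What remains is the final multiplication $S = BY$ in l.15 with the rectangular factor $B \in \zz^{d \times \hat d}$ and the fractional $Y \in \qq^{\hat d \times \hat d}$. Mirroring the argument at the end of the proof of \autoref{theo:bit_operations}, I would first scale each column $Y_i$ by the corresponding fractionality $\fract(\hat B^{(i)})$ to obtain an integer matrix $Y'$, then apply \autoref{lem:matrix_mult} (padding $B$ by zero columns and $Y'$ by zero rows to reach a square ambient dimension $d$). By \autoref{ob:bounded_representation_of_solution} together with \autoref{cor:fractionality_sum} one has $\nu = \sum_i \log(\norm{Y'_i}+1)/\log\norm B = \tilde O(\hat d)$, so that choosing $k=1$ in \autoref{lem:matrix_mult} yields cost $\tilde O(\max\{\hat d, d\}\cdot d^{\omega-1}\log \norm A) = \tilde O(d^{\omega}\log\norm A)$, and a final column-wise division restores $Y$ from $Y'$. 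The dominant of these contributions gives exactly the claimed bound $\tilde O(\max\{n,d\} d^{\omega-1}\log\norm A) + LS(\hat d, n-\hat d, \norm A) + \tilde O((n-\hat d)\hat d^{2}\log\norm A)$.

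The main obstacle, aside from routine bookkeeping, is the rectangular shape of the final product $BY$ together with the fact that $B$ is not necessarily square; the column-wise fractionality bound from \autoref{cor:fractionality_sum} is what prevents a spurious factor of $d$ here and is the only place where the analysis of \autoref{theo:bit_operations} has to be adapted, rather than invoked as a black box.
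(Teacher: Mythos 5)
Your proposal is correct and follows essentially the same approach as the paper: restrict to the $\hat d$ rows that make $\hat B$ non-singular, run the full-rank algorithm there, and lift back via $S = BY$. The paper's correctness argument is stated informally (``every $Y_i$ describes an integral combination of vectors from $A$'', hence $BY_i$ is integral even in the projected-away rows; then $\pi(\lattice(S)) = \pi(\lattice(A))$ gives $\lattice(S) = \lattice(A)$), and your formulation via the observation that $\pi|_{\mathrm{col}(B)}$ is a linear bijection that makes $\pi|_{\lattice(A)}$ a lattice isomorphism onto $\lattice(\hat B \cup \hat C)$ is a cleaner way of expressing exactly that argument. On the running time, the paper dispenses with it in one sentence (``follows similar to \autoref{theo:bit_operations} by adjusting the analysis to dimension $\hat d$''); your decomposition into finding $B$ (l.1), finding independent rows (l.2), the $\hat d$-dimensional loop (l.5--14), and the rectangular product $BY$ (l.15) is more explicit and correctly accounts for where each term in the stated bound comes from, including the use of \autoref{ob:bounded_representation_of_solution} and \autoref{cor:fractionality_sum} to avoid an extra factor of $d$ in the final multiplication. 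The one small thing worth flagging is that \autoref{lem:maximal_subsystem} as stated assumes full column rank, and for l.1 with $\mathrm{rank}(A) = \hat d < d$ you indeed need the (straightforward) variant that finds a maximal independent column subset of size $\hat d$; you note this parenthetically, which is adequate.
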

\begin{proof}
    In order to proof the correctness of~\autoref{alg:fasteuclideanforlowerdimensions}, we need two observations. The first observation is that although $Y$ might be fractional, every $Y_i$ describes an integral combination of vectors from $A$. Now this implies that we also get integral values in the components that are left out in the projection $\pi$. The second observation is that the lattice of $S$ actually captures all elements of the lattice of $A$. With arguments similar to \autoref{thm:fast_euclid_correct} we get correctness in the subspace $\pi(\lattice(S))= \pi(\lattice(A))$. The lattice of $A$ has rank $\hat{d}$. Hence, as there are no vectors in $\lattice(A)$ but not in $\lattice(S)$ considering $\hat{d}$-dimensional subspace of $\pi$, we get $\lattice(A) = \lattice(S)$.

    The running time follows similar to \autoref{theo:bit_operations} by adjusting the analysis to dimension $\hat d$ for all operations in l.6-13. 
\end{proof}

\section{Reducing the Size of the Solution Basis}\label{sec:root_n_solution_size}
Our presented algorithm~\ref{alg:fasteuclidean} already guarantees a linear size bound for the resulting basis of $\norm{S} \leq d\norm{A}$. In this section, we present a post processing procedure which modifies the returned basis matrix $S$ to achieve an improved size bound. We use a well-known method based on a probabilistic analysis from discrepancy theory. Essentially, this is a simple trick that applies the linearity of expectation to balance vectors. See also Alon and Spencer~\cite[section 2.6]{alon2000probabilistic}.

As the argument is elegant and for self-containment we want to give a brief overview of the idea. Given vectors $v^{(1)}, \ldots v^{(n)} \in \mathbb{R}^d$ and a vector $x \in [0,1]^n$ the task of balancing vectors is to find a vector $y \in \{0,1\}^n$ such that $\norm{\sum_{j=1}^n (x_j -y_j)v^{(j)}}$ is small. The greedy algorithm simply chooses each $y_i$ such that the partial sum $w^{(i)} = (x_1 - y_1)v^{(1)} + \ldots + (x_i - y_i)v^{(i)}$ has minimal norm. In order to sketch the resulting bound, consider for a moment $y_i$ to be a random variable with $\mathrm{Pr}[y_i = 1]=x_i$. With linearity of expectation, we get that 
\begin{align*}
    E[\norm{w_i}_2^2] &= E[\sum_{j=1}^n (w^{(i-1)}_j + (x_{i} - y_{i})v^{(i)}_j)^2]\\
    &= \norm{w^{(i-1)}}^2 + 2\norm{w^{(i-1)}\cdot v^{(i)}}^2 E[x_i - y_i] + \norm{v^{(i)}}^2E[(x_i - y_i)^2]\\
    &= \norm{w^{(i-1)}}^2 + x_i(1-x_i)\norm{v^{(i)}}^2.
\end{align*}
The expectation implies that there also exists a choice of $y_i \in\{0,1\}$ such that the expectation is at least met. At the end we get $\norm{w_n}^2 \leq \sum_{j=1}^n x_i(1-x_i)\norm{v^{(i)}} \leq n/4 \cdot\max_{i\leq n}\norm{v_i}^2$.

The procedure to reduce lattice bases with this method applies only to lower triangular matrices ($Y$ in our algorithm) with a specific structure in the diagonal entries, and the running time is again enhanced by our structural result on fractionality. Without leveraging this structural property, the algorithm would require an additional factor of $d$ in the running time. In the following we denote with $\norm{M}_2$ the maximum euclidean norm of any column vector of a matrix $M$.
\begin{theorem}
    Given $B$ and $Y$ as in \autoref{alg:fasteuclidean} l.12, we can compute a matrix $S \in \zz^{d\times d}$ such that
    \begin{itemize}
        \item $\lattice(S) = \lattice(BY)$ and
        \item $\norm{S}_2 \leq \max\{1,\frac{\sqrt{d}}{2}\}\norm{B}_2$
    \end{itemize}
    using $\OTilde(d^3\log\norm{B})$ bit operations.
\end{theorem}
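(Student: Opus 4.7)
The plan is to apply a classical result from discrepancy theory in the style of Alon--Spencer to size-reduce the basis $BY$ while preserving the lattice. After a permutation of columns induced by the pivot order of the outer loop in \autoref{alg:fasteuclidean}, the matrix $Y$ is lower triangular with diagonal entries $Y_{ii}=1/k_i$ for positive integers $k_i$. By \autoref{cor:fractionality_sum} these satisfy $\prod_i k_i = \det(B)$, so $\sum_i \log k_i = \tilde{O}(d\log\norm{B})$, a fact that will be crucial for the running-time analysis. Since $\lattice(BYU)=\lattice(BY)$ for any integer unimodular $U$, and since the lower triangular structure of $YU$ is preserved when $U$ is upper triangular unimodular, the task reduces to finding such a $U$ that makes every column of $S=BYU$ short in the $2$-norm.

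To size-reduce column $j$, I would iterate over rows $i=j+1,\ldots,d$ and at each step greedily choose an integer multiplier $a_i$ to add $a_i Y_i$ to the current $j$-th column of $Y$. Such a step shifts the coefficient of $B_i$ in the partial sum $\sum_{m\geq j} Y_{mj} B_m$ by $a_i/k_i$ while merely repositioning contributions in rows $m>i$ that are dealt with in later steps. The analysis follows the standard Alon--Spencer martingale argument: imagine choosing $a_i$ via stochastic rounding so that the final coefficient $r_i$ of $B_i$ satisfies $E[r_i]=0$ and $\mathrm{Var}(r_i)\leq 1/(4k_i^2)$. Then by conditional expectations and linearity,
\begin{align*}
    E\!\left[\norm{S_j}_2^2\right] \;\leq\; \frac{\norm{B_j}_2^2}{k_j^2} + \sum_{i>j}\frac{\norm{B_i}_2^2}{4k_i^2} \;\leq\; \frac{d}{4}\norm{B}_2^2,
\end{align*}
where the last inequality uses $k_j\geq 2$ in the generic case and is handled by a minor modification when $k_j=1$ (then $B_j\in\lattice(S)$, so one may first subtract $B_j$ from $S_j$, eliminating the diagonal contribution entirely). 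Derandomizing by picking, at each step, the $a_i$ that produces the smaller partial-sum norm achieves at least the expected bound, yielding $\norm{S_j}_2\leq \max\{1,\sqrt{d}/2\}\norm{B}_2$.

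For the running time, the algorithm performs $O(d)$ rounding steps across each of the $d$ columns, and each step updates $O(d)$ entries of $Y$. The main obstacle is the bit cost per update: a naive bound on the fractional entries of $Y$ would be $\det(B) = 2^{\tilde{O}(d\log\norm{B})}$, which would force an extra factor of $d$ in the overall complexity. This is exactly where the structural statement on fractionality pays off. The denominator of any entry in row $i$ of $Y$ divides $k_i$, so a single update in row $i$ costs $\tilde{O}(\log k_i)$ bit operations, and \autoref{cor:fractionality_sum} gives $\sum_i \log k_i = \log\det(B) = \tilde{O}(d\log\norm{B})$. Amortizing, the work per column is $\tilde{O}(d\cdot d\log\norm{B})$, and summing over all $d$ columns yields the claimed $\tilde{O}(d^3\log\norm{B})$ bit operations.
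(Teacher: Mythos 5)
Your proposal follows the paper's approach quite closely: it is the same Alon--Spencer discrepancy argument applied column by column, using the same structural facts (lower triangularity of $Y$, diagonal entries of the form $1/k_i$ with $\prod_i k_i = \det B$, and \autoref{cor:fractionality_sum} to control the bit cost). The difference is in the mechanism of the lattice-preserving transformation. You work with a lower unitriangular integer matrix $U$ and set $S = BYU$, i.e.\ you add integer multiples of later columns $Y_i$ to $Y_j$; the paper instead subtracts a $\{0,1\}$-vector $E[j]$ from $Y_j$, i.e.\ it subtracts an integral combination of the columns of $B$ from $BY_j$. Both preserve the lattice, though by slightly different arguments: you invoke unimodularity of $U$, while the paper argues that each new column remains in $\lattice(BY)$ and the diagonal of $Y$ is unchanged, so $\det(BY') = \det(BY)$. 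Your variance bound $1/(4k_i^2)$ is tighter than the paper's generic $1/4$, but both are absorbed into the same $\sqrt{d}/2\,\norm{B}_2$ final bound.

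There is one genuine issue: your handling of the case $k_j = 1$. You propose to ``first subtract $B_j$ from $S_j$, eliminating the diagonal contribution entirely.'' But that sets the coefficient of $B_j$ in $S_j'$ to zero, which zeroes the $j$-th diagonal entry of the transformed $Y$. The resulting matrix would be singular (so not a basis), and neither the determinant argument nor a unimodularity argument can then certify $\lattice(S') = \lattice(BY)$. The paper handles this case much more simply by setting $Y_j' := e_j$, i.e.\ $S_j := B_j$, which trivially satisfies $\norm{S_j}_2 = \norm{B_j}_2 \le \norm{B}_2$ and is covered by the $\max\{1,\cdot\}$ in the size bound; the determinant argument then goes through because $Y_{jj}' = 1 = Y_{jj}$ is unchanged. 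You should adopt that fix.

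A smaller point on the running time: you claim ``the denominator of any entry in row $i$ of $Y$ divides $k_i$'', but the bound the paper actually proves (see \autoref{ob:bounded_representation_of_solution}) is on \emph{columns}: every entry of $Y_i$ has numerator and denominator bounded by $\fract(B^{(i)})$. The row-wise statement is not what is established, and is not obviously true under the paper's indexing. This does not break your argument, since what you really need is that the total bit size across columns is $\tilde{O}(d\log\norm{B})$, which is exactly what \autoref{cor:fractionality_sum} combined with \autoref{ob:bounded_representation_of_solution} gives, but the justification should be restated column-wise.
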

\begin{proof}
     We compute each column of $S$ separately. Therefore, consider any column $i\leq n$. We consider two cases. The first case is that $Y_{ii} = 1$. In this case we set $Y'_{i} = e_i$, \ie{} the $i$-th unit vector, which clearly satisfies the size bound. 

    In the second case we get that $Y_{ii} \leq 1/2$.\footnote{In fact it is $1/ \fract_B[i]$. This follows from the gcd operation on the translates. The structure $1/ z$ for $z\in \zz$ follows from l.8 as the translate $t$ of $B_\ell$ is also the common denominator which was computed as the lcm of simplified fractions. Hence their gcd is $1$. } We use the greedy algorithm as described above from~\cite[section 2.6]{alon2000probabilistic} to compute a suitable vector $E[i] \in \{0,1\}^d$.\footnote{Note that we do not need to compute any square root although the algorithm formally uses the euclidean norm. In fact, we only need the norm squared.} In this case set $Y'_i = Y_i - E[i]$. 
    
    As described above, for each column $Y'_i$ we only need $O(d)$ calculations of the norm which results in $O(d^2)$ arithmetic operations in order to find $E[i] \in \{0,1\}^d$ such that $\norm{B(Y_i - E[i])}_2 \leq \sqrt{d}/2\norm{B}_2$. Due to our structural result,~\autoref{cor:fractionality_sum}, the running time does not blow up considering bit complexity but aligns with the arithmetic complexity up to logarithmic factors even if in the worst case this is done for every $i\leq d$. This follows from a bound on the numerator and denominator $Y$ as repeated in~\autoref{ob:bounded_representation_of_solution}. Thus, over all vectors, we require $\OTilde(d^3\log\norm{B}_\infty)$ bit operations.  
    
    We set $S:= BY'$. Next, we want to prove that also $\lattice(S) = \lattice(BY)$ is satisfied.
    Note that $Y$ is a lower triangular matrix, thus for column $i$ it is sufficient to consider (row) indices $i, \ldots, d$.    
    For the property $\lattice(S) = \lattice(BY)$ it is very important that our operations above do not change the translate, or in other words $Y_{ii}$ remains the same. In the first case it is unchanged as $Y_{ii} = Y'_{ii} = 1$. In the second case the greedy algorithm chooses the next index $E[i]_j$ from $\{0,1\}$ minimizing the norm of the partial sum.  From the triangular structure of $Y$ and the fact that $Y_{ii} \leq 1/2$, we get that $E[i]_i = 0$. In other words, $Y'_{ii} = Y_{ii} - E[i]_i = Y_{ii}$ and hence is unchanged. Clearly, the calculated vectors $BY'$ are in the lattice $\lattice(BY)$ since they are either a column vector from $B$ or stem from $BY$ but some vectors of $B$ might be subtracted (which is fine since $B$ is also part of the lattice). Thus we get that $\lattice(BY') \subseteq \lattice(BY)$.  Moreover, as we did not change the diagonal entries of $Y$ and $Y$ is a lower triangular matrix, we get that $\det(Y) =  \det(Y')$ and thus $\det(BY) = \det(BY')$. Hence, we get $\lattice(S) = \lattice(BY)$.
\end{proof}

\bibliographystyle{alpha}
\bibliography{references}

\begin{acronym}
\acro{hnf}[HNF]{Hermite normal form}
\acro{snf}[SNF]{Smith normal form}
\acro{de}[DE]{Diophantine Equations}
\acro{lbr}[LBR]{Lattice Basis computation}
\end{acronym}

\end{document}